\documentclass{article}
\usepackage{arxiv}

\usepackage{amsthm}

\hyphenation{}

\usepackage{url}
\usepackage{fix-cm}
\usepackage{xypic}  
\usepackage{wrapfig}


\usepackage{xcolor}  






\usepackage{amssymb}

\usepackage{amsmath}
\usepackage{mathdots}

\usepackage{mathtools}


\usepackage{tensor}
\usepackage{xifthen}

\usepackage{urwchancal}
\DeclareFontFamily{OT1}{pzc}{}
\DeclareFontShape{OT1}{pzc}{m}{it}{<-> s * [1.10] pzcmi7t}{}
\DeclareMathAlphabet{\mathpzc}{OT1}{pzc}{m}{it}

\usepackage{graphicx}
\usepackage{ifpdf}
\ifpdf
\usepackage{epstopdf}
\epstopdfsetup{update,prepend}
\PrependGraphicsExtensions{.svg}
\DeclareGraphicsRule{.svg}{pdf}{.pdf}{
  `inkscape -z -D #1 --export-pdf=\noexpand\OutputFile
}
\fi

\newtheorem{theorem}{Theorem}[section]
\newtheorem{lemma}[theorem]{Lemma}

\newtheorem{proposition}[theorem]{Proposition}

\newtheorem{definition}[theorem]{Definition}





\providecommand{\R}{\mathbb{R}}


\providecommand{\SO}{\mathbf{SO}}

\providecommand{\SE}{\mathbf{SE}}

\providecommand{\MR}{\mathbf{MR}}

\providecommand{\VSLAM}{\mathbf{VSLAM}}

\providecommand{\grpG}{\mathbf{G}}



\providecommand{\gothmr}{\mathfrak{mr}}


\providecommand{\so}{\mathfrak{so}}
\providecommand{\se}{\mathfrak{se}}

\providecommand{\vslam}{\mathfrak{vslam}}


\providecommand{\Sph}{\mathrm{S}}

\providecommand{\calM}{\mathcal{M}}
\providecommand{\calN}{\mathcal{N}}

\providecommand{\calT}{\mathcal{T}}






\providecommand{\vecV}{\mathbb{V}}




\providecommand{\tT}{\mathrm{T}} 

\DeclareMathOperator{\Ad}{Ad}

\DeclareMathOperator*{\argmin}{argmin}


\providecommand{\id}{\mathrm{id}} 





\providecommand{\td}{\mathrm{d}}
\providecommand{\tD}{\mathrm{D}}

\providecommand{\ddt}{\frac{\td}{\td t}}





\def\frameZero{\mbox{$\{0\}$}}


\providecommand{\mr}[1]{\mathring{#1}} 
\providecommand{\ub}[1]{\underline{#1}}

\providecommand{\ob}[1]{\overline{#1}} 
\usepackage{accents}
\usepackage{mathtools}
\makeatletter
\providecommand{\scirc}{%
    \hbox{\fontfamily{\rmdefault}\fontsize{0.4\dimexpr(\f@size pt)}{0}\selectfont{\raisebox{-0.52ex}[0ex][-0.52ex]{$\circ$}}}}

\makeatother

\makeatletter
\providecommand{\ucirc}{%
    \hbox{\fontfamily{\rmdefault}\fontsize{0.4\dimexpr(\f@size pt)}{0}\selectfont{\raisebox{0.0ex}[0ex][-0.52ex]{$\circ$}}}}

\makeatother

\mathchardef\mhyphen="2D

\providecommand{\idx}[5][]{
\ifthenelse{\isempty{#1}}
{\tensor*[_{#4}^{#3}]{#2}{_{#5}}}
{\tensor*[_{#4}^{#3}]{#2}{^{#1}_{#5}}}
}








\providecommand{\etal}{\textit{et al.~}}



\renewcommand{\mr}[1]{#1^\circ}

\usepackage{natbib}

\usepackage{hyperref}





\begin{document}



\headertitle{Constructive Observer Design for Visual Simultaneous Localisation and Mapping}
\title{Constructive Observer Design for Visual Simultaneous Localisation and Mapping}



\author{
    \href{https://orcid.org/0000-0003-4391-7014}{\includegraphics[scale=0.06]{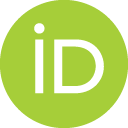}\hspace{1mm}
    Pieter van Goor}
\\
    Australian Centre for Robotic Vision \\
	Australian National University \\
    ACT, 2601, Australia \\
    \texttt{Pieter.vanGoor@anu.edu.au} \\
	\And	\href{https://orcid.org/0000-0002-7779-1264}{\includegraphics[scale=0.06]{orcid.png}\hspace{1mm}
    Tarek Hamel}
\\
    I3S (University C\^ote d'Azur, CNRS, Sophia Antipolis) \\
    and Insitut Universitaire de France \\
    \texttt{THamel@i3s.unice.fr} \\
	\And	\href{https://orcid.org/0000-0002-7803-2868}{\includegraphics[scale=0.06]{orcid.png}\hspace{1mm}
    Robert Mahony}
\\
    Australian Centre for Robotic Vision \\
	Australian National University \\
    ACT, 2601, Australia \\
	\texttt{Robert.Mahony@anu.edu.au} \\
    \And	\href{https://orcid.org/0000-0002-5881-1063}{\includegraphics[scale=0.06]{orcid.png}\hspace{1mm}
    Jochen Trumpf}
\\
    Australian Centre for Robotic Vision \\
	Australian National University \\
    ACT, 2601, Australia \\
	\texttt{Jochen.Trumpf@anu.edu.au} \\
}

\maketitle

\vspace{1cm}

\begin{abstract}
Visual Simultaneous Localisation and Mapping (VSLAM) is a well-known problem in robotics with a large range of applications.
This paper presents a novel approach to VSLAM by lifting the observer design to a novel Lie group $\VSLAM_n(3)$ on which the system output is equivariant.
The perspective gained from this analysis facilitates the design of a non-linear observer with almost semi-globally asymptotically stable error dynamics.
Simulations are provided to illustrate the behaviour of the proposed observer and experiments on data gathered using a fixed-wing UAV flying outdoors demonstrate its performance.
\end{abstract}


\section{Introduction}\label{sec:intro}

\begin{figure}[!htb]
    \centering
    \includegraphics[width=0.75\linewidth]{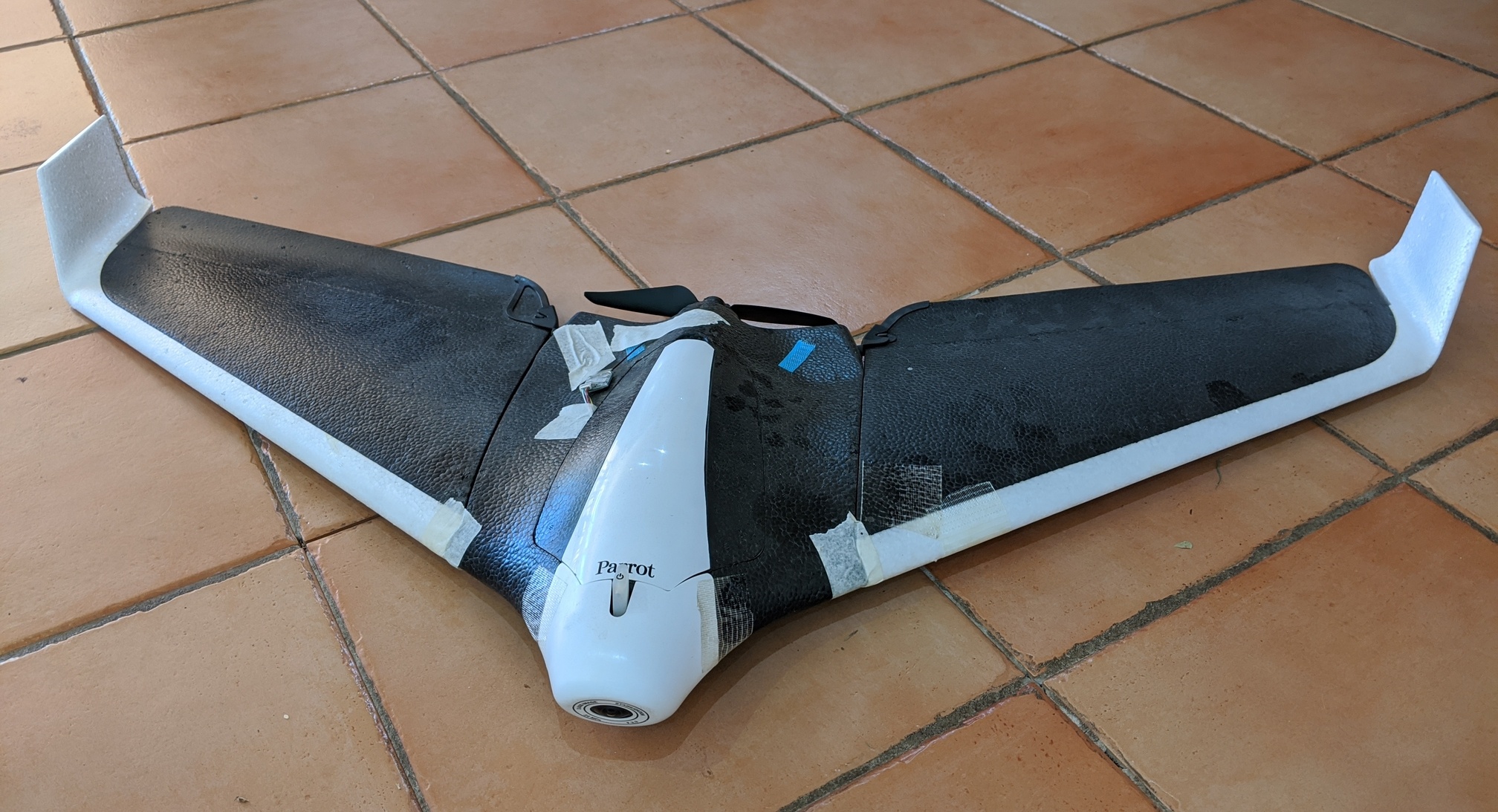}
    \caption{The Disco Parrot UAV used to gather video data to test the proposed observer.
}
    \label{fig:disco_parrot_uav}
\end{figure}

Simultaneous Localisation and Mapping (SLAM) has been an established problem in mobile robotics for at least the last 30 years \cite{2015_Manuel_vslam}.
Visual SLAM (VSLAM) refers to the special case where the only exteroceptive sensors available are cameras, and is frequently used to refer to the challenging situation where only a single monocular camera is available.
The inherent non-linearity of the VSLAM problem remains challenging \cite{2016_Cadena_TRO} and state-of-the-art solutions suffer from high computational complexity and poor scalability \cite{2015_Manuel_vslam,2012_Strasdat_CVIU}.
Due to the low cost and low weight, as well as the ubiquity of single camera systems, the VSLAM problem remains an active research topic \cite{2015_Manuel_vslam,2018_Delmerico_icra}.

Both the SLAM and VSLAM problems have recently attracted interest in the non-linear observer community, drawing from earlier work on attitude estimation \cite{2008_Mahony_tac,2008_Bonnabel_TAC} and pose estimation \cite{2009_Baldwin_icra,2010_Vasconcelos.SCL,RM_2011_Hua_cdc}.
Barrau and Bonnabel \cite{2016_Barrau_arxive} exploited a novel Lie group to design an invariant Extended Kalman Filter for the SLAM problem.
Parallel work by Mahony and Hamel \cite{2017_Mahony_cdc} proposed the same group structure along with a novel quotient manifold structure for the state-space of the SLAM problem.
Work by Zlotnik and Forbes \cite{2018_forbes_TAC} derives a geometrically motivated observer for the SLAM problem that includes estimation of bias in linear and angular velocity inputs.
For the VSLAM problem, where only bearing measurements are available, Lourenco \etal \cite{2016_LouGueBatOliSil,2018_Lourenco_RAS} proposed an observer with a globally exponentially stable error system using depths of landmarks as separate components of the observer.
Bjorne \etal \cite{2017_bjorne_fusion} use an attitude heading reference system (AHRS) to determine the orientation of the robot, and then solve the SLAM problem using a linear Kalman filter.
A similar approach to VSLAM is undertaken by Le Bras \etal \cite{2017_LebHamMahSam}.
Hamel and Samson \cite{2018_Hamel_TAC} have also introduced a Riccati observer  for the case where the orientation of the robot is known.
Recent work by the authors \cite{2019_vangoor_cdc_vslam} introduced a new symmetry structure specifically targeting the  VSLAM problem but used an observer design that was a lifted version of that proposed in \cite{2018_Hamel_TAC}.

In this paper we present a novel non-linear equivariant observer for the VSLAM problem.
The approach uses the SLAM manifold state-space proposed in \cite{2017_Mahony_cdc} along with a novel symmetry Lie-group, $\VSLAM_n(3)$, introduced by van Goor \etal \cite{2019_vangoor_cdc_vslam} but fully developed for the first time in this paper.
We extend the results of \cite{2019_vangoor_cdc_vslam} by providing equivariant group actions on the state and output spaces leading to the definition of the lifted system, a lifted observer and more importantly  an intrinsic error that is globally defined.
We propose a
Lyapunov function expressed in the intrinsic error coordinates and use this to construct an observer for the visual SLAM problem posed on the symmetry group $\VSLAM_n(3)$.
This is in contrast to the majority of state-of-the-art algorithms which depend on local error coordinates and local linearisation \cite{2016_Cadena_TRO}.
The recent IEKF results \cite{2018_brossard_iros} exploit a global symmetry of the state-space, however, the symmetry used is not compatible with visual bearings and the resulting algorithm still depends on local linearisation of the measurement function.
In our proposed algorithm, separate constant gains for landmark bearing and depth estimates are used, making the design algebraically simple and leading to low computational cost.
We show that the error dynamics are \emph{almost semi-globally asymptotically stabilisable} (Def.~\ref{def:semiGAS}).
The resulting algorithm has low computation and memory requirements, making it ideally suited to embedded systems applications in consumer electronics.

This paper consists of six sections alongside the introduction and conclusion.
Section \ref{sec:preliminaries} introduces key notation and identities.
In Section \ref{sec:problem-formulation}, we formulate the kinematics, state-space and output of the VSLAM system, and in Section \ref{sec:symmetry} we introduce the Lie group $\VSLAM_n(3)$ and its actions on the state and output spaces.
In Section \ref{sec:observer} we derive a non-linear observer on the Lie group, and in Sections \ref{sec:simulations} and \ref{sec:experiments} we provide the results of a simulation and a real-world experiment carried out using a Disco Parrot UAV (Figure \ref{fig:disco_parrot_uav}).
The principal contribution of the paper is theoretical and the experimental sections support this by illustrating the properties of the algorithm and demonstrating that it functions on real-world data.
We do not aspire to provide a comprehensive benchmark of performance against state-of-the-art SLAM systems in the present paper.

\section{Preliminaries} \label{sec:preliminaries}

The special orthogonal and special Euclidean matrix Lie groups are denoted $\SO(3)$ and $\SE(3)$, respectively, with Lie algebras $\so(3)$ and $\se(3)$.
For any column vector $\Omega = (\Omega_1, \Omega_2, \Omega_3) \in \R^3$, the corresponding skew-symmetric matrix is denoted
\begin{align*}
\Omega^\times := \left( \begin{matrix}
0 & -\Omega_3 & \Omega_2 \\
\Omega_3 & 0 & -\Omega_1 \\
-\Omega_2 & \Omega_1 & 0
\end{matrix} \right) \in \so(3).
\end{align*}
This matrix has the property that, for any $v \in \R^3$, $\Omega^\times v = \Omega \times v$ where $\Omega \times v$ is the vector (cross) product between $\Omega$ and $v$.
For any unit vector $y \in \Sph^2 \subset \R^3$ and any vector $v \in \R^3$,
\begin{align}
y^\times y^\times v &= yy^\top v - v. \label{eq:cross2_projector_formula}
\end{align}

Consider a homogeneous matrix $P \in \SE(3)$.
The notation $R_P \in \SO(3)$ and $x_P \in \R^3$ is used to represent the rotation and translation components of $P$, respectively; that is
\begin{align*}
P = \begin{pmatrix}
R_P & x_P \\ 0 & 1
\end{pmatrix} \in \SE(3).
\end{align*}
Likewise, for a matrix $U \in \se(3)$, the notation $\Omega_U^\times \in \so(3)$, with $\Omega_U \in \R^3$,
and $V_U \in \R^3$ represent the angular and linear velocity components of $U$, respectively; i.e.
\begin{align*}
U = \begin{pmatrix}
\Omega_U^\times & V_U \\ 0 & 0
\end{pmatrix} \in \se(3).
\end{align*}

For a background on smooth manifolds, Lie groups and their actions, the authors recommend \cite[Chapter 7]{2013_lee_manifolds}.

\section{Problem Formulation} \label{sec:problem-formulation}

\subsection{VSLAM State Space}

Fix an arbitrary reference frame $\{0\}$.
Let $P \in \SE(3)$ and $p_i \in \R^3, i=1,...,n$ represent the robot pose and landmark coordinates, respectively, defined with respect to $\{0\}$.
The raw coordinates of the SLAM problem are written $(P, p_1,..., p_n) \in \SE(3) \times \R^3 \times \cdots \times \R^3$.
The notation $(P, p_i) \equiv (P, p_1,..., p_n)$ is used for simplicity in the sequel.


The physical measurements in a monocular VSLAM system are the bearings (3D directions) of landmarks perceived by the robot.
We assume from now on that the observed landmarks and the robot are not collocated to ensure that the bearing measurements are well defined.
Interestingly, this assumption has a substantive impact on the nature of the global symmetries that can be admitted.
We make this assumption explicit, defining the \emph{total space} $\mr{\calT}_n(3)$ of SLAM configurations considered to be
\begin{align}
\mr{\calT}_n(3) = \big\{ (P,p_i) \in \SE(3) \times \R^3 \times \cdots \times \R^3 \ \vline 
 p_i \neq x_P, i=1,...,n \big\}.
\label{eq:reduced_total_space}
\end{align}
We assume that the trajectory of the robot remains in $\mr{\calT}_n(3)$ for all time.

Given two configurations $(P, p_i), (P', p'_i) \in \mr{\calT}_n(3)$, then $(P, p_i) \simeq (P', p'_i)$ if there exists $S \in \SE(3)$ such that $(P, p_i) = (S^{-1} P', R_S^\top(p'_i - x_S))$.
That is, two sets of coordinates in the total space $\mr{\calT}_n(3)$ are considered equivalent when they are related by a rigid body transformation of the reference frame $\frameZero$.
It is straightforward to show the relation $\simeq$ is an equivalence relation on $\mr{\calT}_n(3)$ and
\begin{align*}
\lfloor P, p_i \rfloor  := \left\{ (S^{-1}P, R_S^\top(p_i - x_S)) \ \vline \ S \in \SE(3) \right\}.
\end{align*}
is the associated equivalence class.
The VSLAM manifold is the set
\begin{align*}
\mr{\calM}_n(3) = \left\{ \lfloor P, p_i \rfloor \ | \ (P, p_i) \in \mr{\calT}_n(3) \right\},
\end{align*}
with quotient manifold structure.
This is the open subset of the SLAM manifold $\calM_n(3)$ considered in \cite{2017_Mahony_cdc} without those equivalence classes where a landmark is co-located with the robot.
Two configurations are equivalent on the SLAM manifold, $(P^1, p_i^1) \simeq (P^2, p_i^2)$, if and only if the ego-centric coordinates of the landmarks are equal, $R_{P^1}^\top(p^1_i - x_{P^1}) = R_{P^2}^\top(p^2_i - x_{P^2})$ for all $i$.

\subsection{VSLAM Kinematics}
Assume that the robot is moving in a static environment.
Define a velocity input vector space $\vecV = \se(3)$ to contain the rigid-body velocity $U \in \se(3)$ of the robot.
The kinematics of the VSLAM system are given by
the system function $f: \mr{\calT}_n(3) \times \vecV \to \tT \mr{\calT}_n(3)$,
\begin{align}
    \frac{\td}{\td t}(P, p_i) &= f((P, p_i), U), \notag \\
    &:= (PU, 0). \label{eq:input_function_f}
\end{align}

\subsection{System Output}

The camera measurements are modelled as elements of the sphere $\Sph^2$.
Each individual bearing is given by an output function $ h^i : \mr{\calT}_n(3) \to \Sph^2$,
\begin{align}
h^i(P, p_i) := \frac{R_P^\top(p_i - x_P)}{\Vert p_i - x_P \Vert}, \label{eq:output_function_h_i}
\end{align}
The output functions are well defined on $\mr{\calM}_n(3)$ since
\begin{align}
h^i(S^{-1} P, R_S^\top(p_i - x_S))
&=\frac
{(R_S^\top R_P)^\top (R_S^\top(p_i - x_S) - R_S^\top(x_P - x_S))}
{\Vert R_S^\top(p_i - x_S) - R_S^\top(x_P - x_S) \Vert}, \notag \\
&= \frac{R_P^\top(p_i - x_P)}{\Vert p_i - x_P \Vert}.  \label{eq:welldef-of-body-coordinates}
\end{align}
The full output space of the VSLAM system is defined as a product of $n$ spheres,
\begin{align*}
\calN^n(3) := \Sph^2 \times \cdots \times \Sph^2,
\end{align*}
with a combined output function $h : \mr{\calT_n(3)} \to \calN^n(3)$
\begin{align}
h ( P, p_i ) := \left( \frac{R_P^\top(p_1 - x_P)}{\Vert p_1 - x_P \Vert}, \ldots,  \frac{R_P^\top(p_n - x_P)}{\Vert p_n - x_P \Vert} \right).
\label{eq:output_function_h}
\end{align}

\section{Symmetry of the VSLAM Problem} \label{sec:symmetry}
\subsection{Symmetry of the Total Space \texorpdfstring{$\mr{\calT}_n(3)$}{T0n(3)}}

Define a Lie group
\begin{align*}
\VSLAM_n(3) = \SE(3) \times (\SO(3) \times \MR(1))^n,
\end{align*}
with product Lie group structure, where $\MR(1)$ is the multiplicative real group of positive real numbers.
This Lie group was first proposed in van Goor \etal \cite{2019_vangoor_cdc_vslam}.
The associated Lie algebra is denoted $\vslam_n(3)$.
We write elements of $\VSLAM_n(3)$ as $(A, (Q, a)_i) \equiv (A, (Q, a)_1, ..., (Q,a)_n) \in \VSLAM_n(3)$.
The group product, identity and inverse are given by
\begin{gather*}
    (A_1, (Q_1,a_1)_i) \cdot (A_2, (Q_2,a_2)_i) = (A_1 A_2, (Q_1 Q_2, a_1 a_2)_i), \\
    \id = (I_4, (I_3, 1)_i), \qquad (A, (Q,a)_i)^{-1} = (A^{-1}, (Q^\top,a^{-1})_i).
\end{gather*}

\begin{lemma}
The mapping $\Upsilon : \VSLAM_n(3) \times \mr{\calT}_n(3) \to \mr{\calT}_n(3)$ defined by
\begin{align}\label{eq:group_action_upsilon}
\Upsilon((A,(Q,a)_i),(P,p_i))
:= (PA, a_i^{-1}R_{PA}Q_i^\top R_P^\top(p_i-x_P) +x_{PA}),
\end{align}
is a transitive right group action of $\VSLAM_n(3)$ on $\mr{\calT}_n(3)$.
\end{lemma}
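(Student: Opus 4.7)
The proof has three standard obligations: (i) well-definedness of the map, i.e.\ that the image of $\Upsilon$ stays in $\mr{\totT}_n(3)$, (ii) the identity axiom, and (iii) the compatibility axiom for a right action. I will carry these out in order, letting $(P',p'_i) := \Upsilon((A,(Q,a)_i),(P,p_i))$ denote the image coordinates.

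For well-definedness, the new pose is $P' = PA$ so $x_{P'} = x_{PA}$ and the new landmark is $p'_i = a_i^{-1} R_{PA} Q_i^\top R_P^\top(p_i - x_P) + x_{PA}$. Hence $p'_i - x_{P'} = a_i^{-1} R_{PA} Q_i^\top R_P^\top(p_i - x_P)$, and since $a_i^{-1} > 0$ and the rotation factors are invertible, $p'_i \neq x_{P'}$ is equivalent to $p_i \neq x_P$, which holds by assumption. For the identity axiom, substituting $A = I$, $Q_i = I$, $a_i = 1$ yields $PA = P$, $R_{PA} = R_P$, $x_{PA} = x_P$, and the landmark formula collapses via $R_P R_P^\top(p_i - x_P) = p_i - x_P$ to $p_i$.

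The main content is the compatibility axiom. For a right action I must verify
\begin{equation*}
\Upsilon(g_1 \cdot g_2, (P,p_i)) = \Upsilon(g_2, \Upsilon(g_1, (P,p_i)))
\end{equation*}
with $g_1 = (A,(Q,a)_i)$ and $g_2 = (B,(R,b)_i)$. Using the componentwise product structure, $g_1 \cdot g_2 = (AB, (Q_i R_i, a_i b_i)_i)$, so the left-hand side has landmark component $(a_i b_i)^{-1} R_{PAB}(Q_i R_i)^\top R_P^\top(p_i - x_P) + x_{PAB} = b_i^{-1} a_i^{-1} R_{PAB} R_i^\top Q_i^\top R_P^\top(p_i - x_P) + x_{PAB}$. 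For the right-hand side, the crucial intermediate computation is that after applying $g_1$ the ego-centric offset simplifies as $R_{PA}^\top(p'_i - x_{PA}) = a_i^{-1} Q_i^\top R_P^\top(p_i - x_P)$, because the outer $R_{PA}$ in the definition of $p'_i$ is cancelled by the $R_{PA}^\top$ that appears when $g_2$ is applied. Substituting into the formula for $\Upsilon(g_2,\cdot)$ then produces exactly the same expression, where the order $R_i^\top Q_i^\top$ confirms that the product in $\SO(3)^n$ is respected and the action is a right action (as opposed to a left action, which would require the opposite order).

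The only conceptual subtlety is keeping track of which frame each quantity lives in during the compatibility check; the algebra itself is mechanical because the design of \eqref{eq:group_action_upsilon} uses the ego-centric coordinate $R_P^\top(p_i - x_P)$ as the natural object on which the $\SO(3) \times \MR(1)$ factors act. All three checks reduce to elementary matrix and scalar manipulations.
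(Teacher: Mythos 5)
Your proof is correct and follows essentially the same route as the paper's: a direct verification of the identity and compatibility axioms, where the key cancellation $R_{PA}^\top(p'_i - x_{PA}) = a_i^{-1} Q_i^\top R_P^\top(p_i - x_P)$ collapses the composed action to the action of the componentwise product. Your only addition is the explicit check that the image stays in $\mr{\totT}_n(3)$, which the paper leaves implicit.
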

\begin{proof}
Trivially, $\Upsilon((I_4, (I_3, 1)_i), (P, p_i)) = (P, p_i)$ for any $(P, p_i) \in \mr{\calT}_n(3)$.
Let $(A_1,(Q_1,a_1)_i), (A_2, (Q_2,a_2)_i) \in \VSLAM_n(3)$ and $(P, p_i) \in \mr{\calT}_n(3)$ be arbitrary.
Then
\begin{align*}
\Upsilon((A_1, (Q_1,a_1)_i), \Upsilon((A_2,(Q_2,a_2)_i), (P,p_i)))
&= (PA_2A_1, (a_1^{-1}a_2^{-1} R_{PA_2A_1} Q_1^\top  \\
&\hspace{1cm} Q_2^\top R_P^\top(p-x_P))+ x_{PA_2A_1})_i), \\
&= \Upsilon((A_2A_1, (Q_2 Q_1, a_2a_1)_i), (P, p_i)), \\
&= \Upsilon((A_2,(Q_2,a_2)_i) \cdot (A_1,(Q_1,a_1)_i), (P, p_i)).
\end{align*}
Thus $\Upsilon$ is a group action.
To see that $\Upsilon$ is transitive, let $(P, p_i), (P', p_i') \in \mr{\calT}_n(3)$ be arbitrary.
Choose $(A, (Q,a)_i) \in \VSLAM_n(3)$ to satisfy
\begin{gather*}
    A = P^{-1} P', \qquad a_i = \frac{\Vert p_i - x_{P} \Vert}{\Vert p'_i - x_{P'} \Vert}, \\
    Q_i \frac{R_{P'}^\top (p_i' - x_{P'})}{\Vert p_i' - x_{P'} \Vert} = \frac{R_{P}^\top (p_i - x_{P})}{\Vert p_i - x_{P} \Vert}.
\end{gather*}
Then $\Upsilon((A,(Q,a)_i), (P,p_i)) = (P',p_i')$.
\end{proof}

The action $\Upsilon$ of $\VSLAM_n(3)$ on $\mr{\calT}_n(3)$ is shown in Figure \ref{fig:symmetry_action}.
Given $(A,(Q_A,a)_i) \in \VSLAM_n(3)$ and $(P,p_i) \in \mr{\calT}_n(3)$, the action transforms the robot pose $P$ by right-translation by $A$.
The landmark points are transformed by considering their body-fixed coordinates; applying a rotation $Q_i^\top$ and scaling $a_i^{-1}$; transforming them along with the robot pose; and finally writing the result in inertial coordinates.

\begin{figure}[ht]
    \begin{centering}
    \includegraphics[width = 0.8\linewidth]{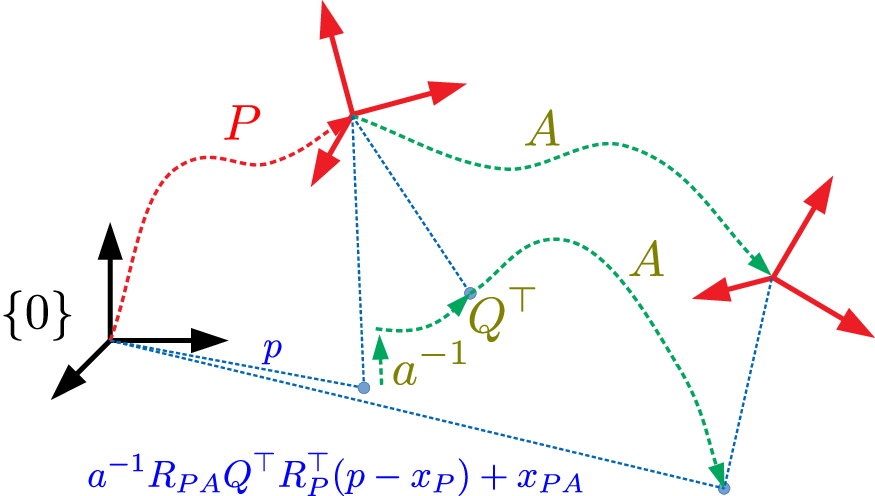}
    \caption{The action $\Upsilon$ of the VSLAM group on the state space for some given $(A,(Q_A,a)_i) \in \VSLAM_n(3)$ and $(P,p_i) \in \mr{\calT}_n(3)$.
    The pose $P$ is mapped to $P A$.
    The body fixed frame landmark points $R_P^\top (p_i - x_P)$ are rotated by $Q^\top$ and scaled by $a^{-1}$ in the body-fixed frame before \emph{transforming with the robot pose} to a new point $p_i'$ which is then rewritten in the inertial frame.
    }
    \label{fig:symmetry_action}
    \end{centering}
\end{figure}

\subsection{Symmetry of the Output Space}
There is an action $\rho$ of the $\VSLAM_n(3)$ group on the output space such that the measurement function $h$ defined in \eqref{eq:output_function_h} is equivariant with respect to $\Upsilon$ and $\rho$.
The following lemmas define this action $\rho$ and show the equivariance of $h$ for the proposed $\VSLAM_n(3)$ geometry.
The authors know of no output action that makes bearing outputs equivariant with respect to prior geometries proposed for SLAM \cite{2016_Barrau_arxive,2017_Mahony_cdc}.
The equivariance structure enables the development of a globally-defined intrinsic error.

\begin{proposition}
The mapping $\rho : \VSLAM_n(3) \times \calN^n(3) \to \calN^n(3)$ defined by
\begin{align} \label{eq:group_action_rho}
\rho((A,&(Q,a)_i),y_i) = Q_i^\top y_i,
\end{align}
is a right group action of $\VSLAM_n(3)$ on $\calN^n(3)$.
\end{proposition}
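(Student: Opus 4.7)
The plan is to verify directly the two defining axioms of a right group action: (i) the identity element acts trivially, and (ii) $\rho$ is compatible with group multiplication in the reverse order, i.e.\ $\rho(g_1,\rho(g_2,y)) = \rho(g_2 \cdot g_1, y)$. Before that, I would briefly observe that $\rho$ actually lands in $\calN^n(3)$: each factor of $\calN^n(3)$ is $\Sph^2$, and since $Q_i \in \SO(3)$ is an isometry, $\Vert Q_i^\top y_i \Vert = \Vert y_i \Vert = 1$, so $\rho$ is a well-defined map into the product of spheres.

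For the identity axiom, the identity element of $\VSLAM_n(3)$ is $(I_4,(I_3,1)_i)$, and on the $i$th factor $\rho$ gives $I_3^\top y_i = y_i$, so $\rho$ restricted to the identity is the identity on $\calN^n(3)$. The $\SE(3)$ component $A$ and the scalar components $a_i$ never appear in the formula for $\rho$, but this is not a problem: we just need the map to satisfy the action axioms, not to use every coordinate nontrivially.

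For the compatibility axiom, I would pick arbitrary $(A_1,(Q_1,a_1)_i)$ and $(A_2,(Q_2,a_2)_i)$ in $\VSLAM_n(3)$ and arbitrary $y \in \calN^n(3)$, then compute componentwise:
\begin{align*}
\rho\bigl((A_1,(Q_1,a_1)_i),\, \rho((A_2,(Q_2,a_2)_i), y_i)\bigr)
&= \rho\bigl((A_1,(Q_1,a_1)_i),\, Q_{2,i}^\top y_i\bigr) \\
&= Q_{1,i}^\top Q_{2,i}^\top y_i
= (Q_{2,i} Q_{1,i})^\top y_i.
\end{align*}
On the other hand, using the product Lie group multiplication on $\VSLAM_n(3)$,
\begin{align*}
(A_2,(Q_2,a_2)_i)\cdot (A_1,(Q_1,a_1)_i) = (A_2 A_1,(Q_2 Q_1, a_2 a_1)_i),
\end{align*}
so applying $\rho$ to this composed element yields $(Q_{2,i} Q_{1,i})^\top y_i$, matching the previous expression.

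Since this is a routine two-line verification, there is no real obstacle; the only point worth flagging explicitly is that the ordering of the multiplication is reversed (making this a \emph{right} rather than left action), and that this ordering is consistent with the one used for $\Upsilon$ in the preceding lemma, which is what we need for the equivariance of $h$ proved in the following lemma.
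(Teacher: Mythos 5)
Your verification is correct, and it simply fills in the direct axiom-checking argument that the paper dismisses with ``The proof is straightforward''; in particular your identity $Q_{1,i}^\top Q_{2,i}^\top = (Q_{2,i}Q_{1,i})^\top$ matches the right-action composition convention the paper uses for $\Upsilon$. Nothing is missing.
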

\begin{proof}
The proof is straightforward.
\end{proof}

\begin{lemma}
The output $h:\mr{\calT}_n(3) \to \calN^n(3)$ \eqref{eq:output_function_h} is equivariant with respect to actions $\Upsilon$ \eqref{eq:group_action_upsilon} and $\rho$ \eqref{eq:group_action_rho}.
That is, for any $X \in \VSLAM_n(3)$ and any $\xi \in \mr{\calT}_n(3)$,
\begin{align*}
    h(\Upsilon(X, \xi)) = \rho(X, h(\xi)).
\end{align*}
\end{lemma}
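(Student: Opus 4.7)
The plan is direct computation: apply $h^i$ to the $i$th component of $\Upsilon(X,\xi)$, simplify using the orthogonality of $R_{PA}$ and $Q_i$, and show the result equals $Q_i^\top h^i(\xi)$, which matches the $i$th component of $\rho(X,h(\xi))$.

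More concretely, I would fix $\xi = (P,p_i) \in \mr{\totT}_n(3)$ and $X = (A,(Q,a)_i) \in \VSLAM_n(3)$, and write $\Upsilon(X,\xi) = (P',p'_i)$ where $P' = PA$ and $p'_i = a_i^{-1}R_{PA}Q_i^\top R_P^\top(p_i - x_P) + x_{PA}$. The key observations are then: (i) $R_{P'} = R_{PA}$ and $x_{P'} = x_{PA}$, so that $p'_i - x_{P'} = a_i^{-1} R_{PA} Q_i^\top R_P^\top(p_i - x_P)$; (ii) left-multiplying by $R_{P'}^\top = R_{PA}^\top$ collapses the $R_{PA}$ factor and yields $R_{P'}^\top(p'_i - x_{P'}) = a_i^{-1} Q_i^\top R_P^\top(p_i - x_P)$; (iii) since $Q_i \in \SO(3)$ is orthogonal and $a_i > 0$, the norm satisfies $\|p'_i - x_{P'}\| = a_i^{-1} \|p_i - x_P\|$. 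Substituting into the definition \eqref{eq:output_function_h_i} of $h^i$, the common scalar factor $a_i^{-1}$ cancels between numerator and denominator, leaving
\begin{align*}
h^i(\Upsilon(X,\xi)) = Q_i^\top \frac{R_P^\top(p_i - x_P)}{\|p_i - x_P\|} = Q_i^\top h^i(\xi),
\end{align*}
which is exactly the $i$th component of $\rho(X, h(\xi))$ by definition \eqref{eq:group_action_rho}. Doing this for each $i = 1,\dots,n$ gives the claim.

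There is no real obstacle here: the equivariance was effectively built into the definition \eqref{eq:group_action_upsilon}, and the essential step is recognising that the nontrivial parts of the $\Upsilon$ action (the $R_{PA}$ factor and the shift by $x_{PA}$) are precisely the pieces that are removed when forming body-frame bearings, while the scalar $a_i^{-1}$ acts only on depth and so is invisible to the unit-vector output. The only minor care needed is to note that the assumption $p_i \neq x_P$ defining $\mr{\totT}_n(3)$ is preserved under $\Upsilon$ (since $a_i^{-1} R_{PA} Q_i^\top R_P^\top$ is invertible), so the denominator $\|p'_i - x_{P'}\|$ remains nonzero and $h^i(\Upsilon(X,\xi))$ is well defined.
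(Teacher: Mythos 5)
Your proposal is correct and follows essentially the same route as the paper's proof: substitute the components of $\Upsilon(X,\xi)$ into \eqref{eq:output_function_h_i}, observe that the translation by $x_{PA}$ cancels, the factor $R_{PA}$ is removed by $R_{PA}^\top$, and the scalar $a_i^{-1}$ cancels between numerator and denominator, leaving $Q_i^\top h^i(\xi)=\rho(X,h(\xi))_i$. The additional remark that $\Upsilon$ preserves the condition $p_i\neq x_P$ is a harmless extra check not spelled out in the paper.
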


\begin{proof}
Let $X = (A, (Q,a)_i)$ and $\xi = (P, p_i)$ be arbitrary.
Then
\begin{align*}
h(\Upsilon(X, \xi))
&= h\left( PA, a_i^{-1}R_{PA}Q_i^\top R_P^\top(p_i-x_P) +x_{PA}  \right), \\
&=  \frac{R_{PA}^\top ( a_i^{-1}R_{PA}Q_i^\top R_P^\top(p_i-x_P) +x_{PA} - x_{PA})}{\Vert a_i^{-1}R_{PA}Q_i^\top R_P^\top(p_i-x_P) +x_{PA}  - x_{PA} \Vert}, \\
&= \frac{Q_i^\top R_P^\top(p_i-x_P)}{\Vert p_i-x_P \Vert} , \\
&= \rho(X, h(\xi)).
\end{align*}
\end{proof}

\subsection{Lift of the VSLAM Kinematics}
In order to consider the system on the $\VSLAM_n(3)$ group, the kinematics from the total space must be lifted onto the group.
A lift is a map $\Lambda: \mr{\calT}_n(3) \times \vecV \to \vslam_n(3)$ such that
\begin{align} \label{eq:lift_condition}
\tD \Upsilon_{(P, p_i)}(\id) \left[ \Lambda((P, p_i), U) \right] = f((P, p_i), U)
\end{align}
where $ f((P, p_i), U)$ is given by \eqref{eq:input_function_f}.

\begin{lemma} \label{lem:velocity_lift}
The function $\Lambda: \mr{\calT}_n(3) \times \vecV \to \vslam_n(3)$, defined by
\begin{align}
\Lambda((P, p_i), U) := (U, (\Lambda_Q (U, R_P^\top(p_i - x_P)), \;
\Lambda_a (U, R_P^\top(p_i - x_P)) )), \label{eq:velocity_lift}
\end{align}
where $\Lambda_Q : \se(3) \times (\R^3 \setminus \{0\}) \to \so(3)$ is given by
\[
\Lambda_Q \left( U, q \right) := \left( \Omega_U + \frac{q \times V_U}{|q|^2}\right)^\times,
\]
and $\Lambda_a : \se(3) \times (\R^3 \setminus \{0\}) \to \gothmr(1)$ is given by
\begin{align*}
\Lambda_a \left( U, q \right) := \frac{q^\top V_U}{|q|^2},
\end{align*}
is a lift in the sense of \eqref{eq:lift_condition} of the kinematics \eqref{eq:input_function_f} onto $\vslam_n(3)$ with respect to the group action \eqref{eq:group_action_upsilon}.
\end{lemma}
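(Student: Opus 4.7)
The plan is to verify the lift condition \eqref{eq:lift_condition} by direct computation: parameterise a smooth curve through the identity of $\VSLAM_n(3)$ whose tangent vector is $\Lambda((P,p_i),U)$, apply $\Upsilon_{(P,p_i)}$, differentiate at $t=0$, and check component-by-component that the result equals $f((P,p_i),U)=(PU,0)$.

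First I would pick a general Lie algebra element $W = (U, (\Omega_Q^\times, \alpha)_i) \in \vslam_n(3)$ and take the curve $X(t) = \exp(tW)$, so that $A(t)=\exp(tU)$, $Q_i(t)=\exp(t\Omega_Q^\times)$, $a_i(t)=\exp(t\alpha)$, with derivatives at $t=0$: $\dot R_{A}(0) = \Omega_U^\times$, $\dot x_A(0)=V_U$, $\dot Q_i^\top(0) = -\Omega_Q^\times$, and $\ddt a_i^{-1}(0) = -\alpha$. For the $\SE(3)$ component of $\Upsilon$, differentiating $PA(t)$ immediately yields $PU$, matching the first slot of $f$. The work is in the landmark slot: writing $q := R_P^\top(p_i - x_P)$ (which is well defined since $(P,p_i) \in \mr{\totT}_n(3)$), the landmark component is
\begin{align*}
\Upsilon_{p_i}(t) = a_i^{-1}(t)\, R_P R_{A(t)} Q_i^\top(t)\, q + R_P x_{A(t)} + x_P.
\end{align*}
Differentiating at $t=0$ by the product rule and collecting $R_P$ gives
\begin{align*}
\ddt\Bigr|_{0}\Upsilon_{p_i} = R_P\bigl[-\alpha\, q + \Omega_U^\times q - \Omega_Q^\times q + V_U\bigr].
\end{align*}
The lift condition thus reduces to the algebraic identity
\begin{align*}
-\alpha\, q + \Omega_U^\times q - \Omega_Q^\times q + V_U = 0
\end{align*}
for the specific choices $\Omega_Q = \Omega_U + (q\times V_U)/|q|^2$ and $\alpha = q^\top V_U / |q|^2$ prescribed by \eqref{eq:velocity_lift}.

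Substituting these choices, the $\Omega_U^\times q$ terms cancel and what remains is
\begin{align*}
-\tfrac{q^\top V_U}{|q|^2} q - \tfrac{1}{|q|^2}(q\times V_U)\times q + V_U.
\end{align*}
The key identity I would invoke is the standard double-cross-product expansion $(q\times V_U)\times q = |q|^2 V_U - (q^\top V_U)\, q$, which, when plugged in, causes the two $V_U$ terms and the two $q$-aligned terms to cancel pairwise, yielding $0$. The only subtlety (and the main thing to be careful about) is keeping track of signs and the non-commutativity when differentiating the triple product $a_i^{-1}(t) R_{A(t)} Q_i^\top(t)$ at the identity; once that is done correctly, the cross-product identity finishes the proof. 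There is no issue of division by zero because the definition of $\mr{\totT}_n(3)$ in \eqref{eq:reduced_total_space} guarantees $|q| \neq 0$, so $\Lambda$ is well defined along any admissible trajectory.
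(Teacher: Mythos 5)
Your proposal is correct and follows essentially the same route as the paper: differentiate the group action $\Upsilon_{(P,p_i)}$ along a curve through the identity with tangent $\Lambda((P,p_i),U)$, observe the $\SE(3)$ slot gives $PU$, and show the landmark slot vanishes after substituting $\Lambda_Q$ and $\Lambda_a$. Your use of the double-cross-product expansion $(q\times V_U)\times q = |q|^2 V_U - (q^\top V_U)q$ is just the identity \eqref{eq:cross2_projector_formula} the paper invokes, written without normalising $q$, and your exponential-curve parameterisation versus the paper's first-order curve is an immaterial difference.
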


\begin{proof}
Given $(P,p_i) \in \mr{\calT}_n(3)$, let $q_i := R_P^\top(p_i - x_P)$, $W^\times_i := \Lambda_Q(U, q_i)$, and $w_i := \Lambda_a(U, q_i)$.
Then
\begin{align}
\tD \Upsilon_{(P, p_i)}(\id) \left[\Lambda((P, p_i), U) \right]
& = \tD \Upsilon_{(P, p_i)}(\id) \left[ (U, (W^\times_i, w_i)) \right] \notag \\
& = \left( PU, v_i \right) \label{eq:equi_formula_proof}
\end{align}
where $v_i$ is
\begin{align*}
v_i = \left. \frac{\td}{\td s} \right|_{s=0}
\left[ (1+s w_i)^{-1} R_{P(I_4+sU)} (I_3+s W_i^\times)^\top q_i
 + x_{P(I_4+sU)}   \right].
\end{align*}
Computing this derivative and evaluating at $s = 0$ one obtains
\begin{align}
v_i & = -w_i R_{P} q_i + R_{P}\Omega_U^\times q_i + R_{P} (W_i^\times)^\top q_i + R_{P} V_U, \notag  \\
&= -\frac{q_i^\top V_U}{|q_i|^2} R_{P} q_i + R_{P}\Omega_U^\times q_i + R_{P} V_U 
 - R_{P} \left( \Omega_U + \frac{q_i \times V_U}{|q_i|^2}\right)^\times q_i, \label{eq:sub_lambda} \\
&=  -R_{P} \frac{q_i q_i^\top}{|q_i|^2} V_U + R_{P} V_U + R_{P} \frac{q_i q_i^\top}{|q_i|^2} V_U - R_P V_U, \label{eq:cancel_and sub}\\
&= 0, \notag
\end{align}
where \eqref{eq:sub_lambda} follows from substituting for $w_i$ and $W_i$ with the full expressions for $\Lambda_a$ and $\Lambda_Q$, and \eqref{eq:cancel_and sub} follows from cancelling the $\Omega_U$ term and using the relationship \eqref{eq:cross2_projector_formula} as well as rearranging the first term.
The result follows from substituting directly into \eqref{eq:equi_formula_proof} and recalling \eqref{eq:input_function_f}.
\end{proof}

\section{Observer Design} \label{sec:observer}

Figure \ref{fig:schematic} shows a schematic overview of the proposed observer.
The key features of equivariant observer design are the distinction between the observer state $\hat{X} \in \VSLAM_n(3)$ and the estimated state $(\hat{P}, \hat{p}_i) \in \calT_n^\circ(3)$, and the design of the correction term around the output error $(d_i) = \rho(\hat{X}^{-1}, y_i)$ rather than the raw measurements $(y_i) \in \calN^n(3)$.
\begin{figure*}[!htb]
    \centering
    \includegraphics[width=0.8\linewidth]{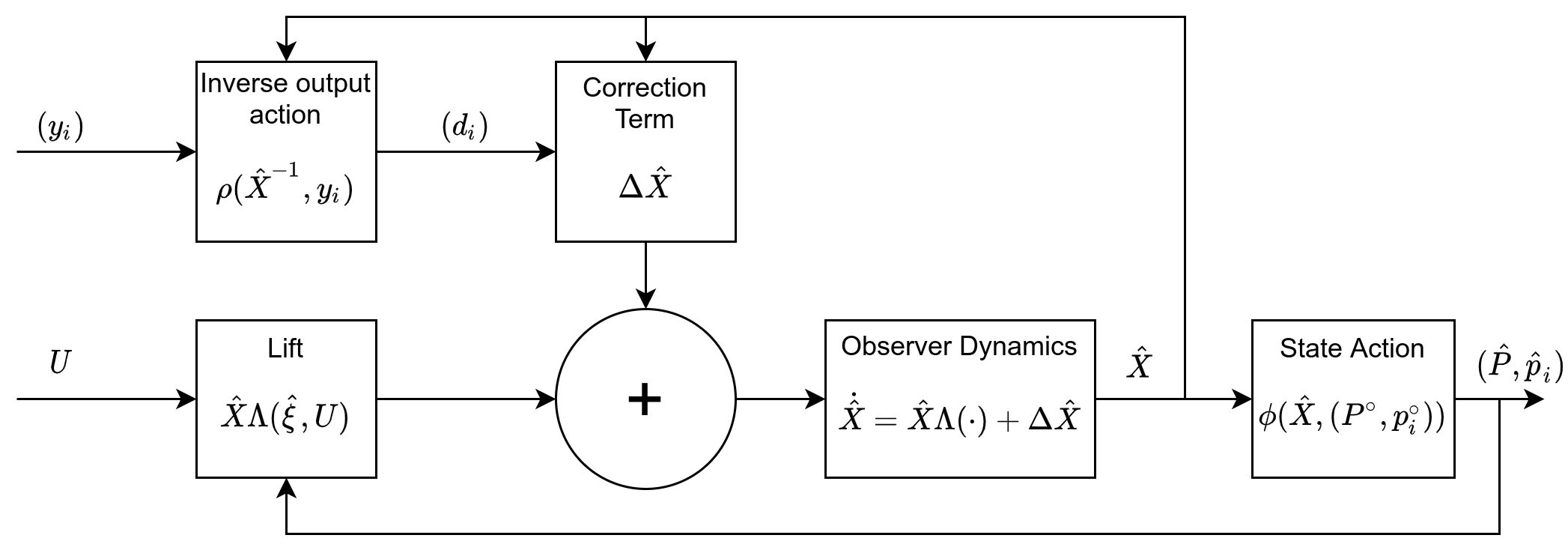}
    \caption{An overview of the observer system.
    Note the distinction between the observer state $\hat{X}$ and the estimated state $(\hat{P}, \hat{p}_i)$, and that the correction term $\Delta$ is based on the output error $(d_i)$ rather than the raw measurement $(y_i)$.}
    \label{fig:schematic}
\end{figure*}

\subsection{Lifted System Kinematics}
Let $\xi = (P, p_i) \in \mr{\calT}_n(3)$ denote the true state of the VSLAM system.
The kinematics of $\xi$ are given by \eqref{eq:input_function_f}.
Choose an arbitrary origin configuration $\mr \xi = (\mr P, \mr{p}_i) \in \mr{\calT}_n(3)$.
The lifted system is (Lemma~\ref{lem:velocity_lift})
\begin{align}
\frac{\td}{\td t} X &= X \Lambda(\Upsilon(X, \mr{\xi}), U),
\label{eq:lifted_system_state} \\
&= (A U, (Q \Lambda_Q, a \Lambda_a)_i), \notag
\end{align}
for $X = (A, (Q,a)_i) \in \VSLAM_n(3)$.
If $\Upsilon(X(0), \mr{\xi}) = \xi(0)$ then trajectories of the lifted system kinematics project to trajectories of the VSLAM kinematics \eqref{eq:input_function_f} \cite{RM_2013_Mahony_nolcos}.
That is, recalling \eqref{eq:group_action_upsilon}
\begin{align*}
(P(t),p_i(t))  = \Upsilon(X(t), \mr{\xi})
 = (\mr{P} A, a_i^{-1} R_{\mr{P} A} Q_i^\top R_{\mr{P}}^\top(\mr{p}_i-x_{\mr{P}}) + x_{\mr{P}A})
\end{align*}
for all $t \geq 0 $, where we have dropped the time dependence of $A$, $Q_i$ and $a_i$ from the notation to improve readability.

\subsection{Landmark Observer}

Let $X = (A, (Q,a)_i)$ be a trajectory of the lifted system \eqref{eq:lifted_system_state} associated with a trajectory  $\xi = (P,p_i)$ of the true system satisfying \eqref{eq:input_function_f} for measured input signal $U = (\Omega_U^\times, V_U)$.
Let $y_i := h^i(\xi)$ \eqref{eq:output_function_h_i} denote the output.

Fix an arbitrary origin configuration $\mr{\xi} = (\mr{P}, \mr{p}_i) \in \mr{\calT}_n(3)$.
The observer state is $\hat{X} = (\hat{A}, (\hat{Q}, \hat{a})_i) \in \VSLAM_n(3)$, with kinematics given by
\begin{align} \label{eq:group_observer_state}
\frac{\td}{\td t} \hat{X} &:= \hat{X} \Lambda(\Upsilon(\hat{X}, \mr \xi), U) - \Delta_{\hat{X}} \hat{X}, \notag \\
\hat{X}(0) &= \id,
\end{align}
where $\Lambda$ is the lift defined by \eqref{eq:velocity_lift} and $\Delta_{\hat{X}} = (\Delta, (\Gamma, \gamma)_i) \in \vslam_n(3)$ is a  correction term that is chosen later.
The state estimate is given by $\hat{\xi} = (\hat{P}, \hat{p}_i) = \Upsilon(\hat{X}, \mr{\xi})$  \eqref{eq:group_action_upsilon}.
Let $\mr{y}_i := h^i(\mr{\xi})$ denote the origin output, and let $d_i$ denote the output error \cite{RM_2013_Mahony_nolcos}, defined as
\begin{align} \label{eq:output_error}
d_i &= \rho(\hat{X}^{-1}, y_i) = \rho(X\hat{X}^{-1}, \mr{y}_i).
\end{align}
Define the true range $r_i = \Vert {p}_i - x_{{P}} \Vert$ and estimated range $\hat{r}_i = \Vert \hat{p}_i - x_{\hat{P}} \Vert$ for each $i$.
The range error is
\begin{align} \label{eq:range_error}
    \tilde{r}_i = \frac{\hat{r}_i}{r_i},
\end{align}
for each $i$.

Define the twice differentiable barrier function $\beta^\epsilon_{\ub{c}}: (\epsilon, \infty) \to [0, \infty)$ to be
\begin{align} \label{eq:beta_function}
    \beta^\epsilon_{\ub{c}} (c) &:= \begin{cases}
        \frac{(c - \ub{c})^2}{(\ub{c} - \epsilon)^2(c - \epsilon)}, & \epsilon < c < \ub{c} \\
        0, & c \geq \ub{c}
        \end{cases},
\end{align}
for parameters $0 < \epsilon < \ub{c}$.

The landmark correction terms $\Gamma_i$ and $\gamma_i$ for $\Delta_{\hat{X}}$ are defined to be
\begin{align} \label{eq:landmark_corrections}
\Gamma_i &:= \left(\frac{d_i^\top\hat{Q}_i V_U}{\hat{r}_i(1 + d_i^\top \mr{y}_i)}
-\frac{k_i}{{(1+ d_i^\top \mr{y}_i)^2}} \right) (d_i^\times \mr{y}_i)^\times \notag \\
&\qquad +\frac{1}{\hat{r}}\left( (\mr{y}_i-d_i)^\times \hat{Q}_iV_U\right)^\times, \notag \\
\gamma_i &:= \frac{\alpha_i}{\hat{r}_i^2}\left((1-d_i^\top \mr{y}_i) d_i^\top \hat{Q}_i V_U - {\mr{y}}_i^{\top} (d_i \times \hat{Q}_i V_U)^\times d_i \right) \notag \\
&\qquad + \frac{1}{\hat{r}_i}(\mr{y}_i - d_i)^\top \hat{Q}_i V_U
+ \frac{\alpha_i}{\hat{r}_i} \beta^\epsilon_{\ub{c}} (\hat{r}_i),
\end{align}
where $k_i$ and $\alpha_i$ are constant positive scalars.
The robot pose correction term $\Delta \in \SE(3)$ can be chosen to be any continuous function of the observer state and measurements (\textit{cf}.~\S\ref{sec:Delta}).

\begin{theorem} \label{thm:landmark_observer}
Consider the observer $\hat{X} \in \VSLAM_n(3)$ with kinematics \eqref{eq:group_observer_state}.
Assume that $ U $ is bounded, and that $y_i^\times y_i^\times V_U$ is persistently exciting, in the sense that there exist $T > 0$ and $\mu > 0$ such that
\begin{align}
    \frac{1}{T} \int_{t}^{t+T} \Vert y_i^\times y_i^\times V_U \Vert \td \tau \geq \mu,
    \label{eq:pe_velocity}
\end{align}
for each $i$ and all $t > 0$.
Assume that there exist bounds $0 < \ub{r} < \ob{r} \in \R$ such that
\[
\ub{r} \leq \vert |p_i - x_P| \vert \leq \ob{r}
\]
for all time.

Then the landmark correction terms \eqref{eq:landmark_corrections}
define an almost semi-globally asymptotically stabilising (Def.~\ref{def:semiGAS}) correction term for the error dynamics of $d_i, \tilde{r}_i$ \textup{(\ref{eq:output_error},\ref{eq:range_error})} around the equilibrium $(\mr{y}_i, 1)$ with exception set
\begin{align}
\chi = \{ (d, \tilde{r})_i \in (\Sph^2 \times \R_+)^n \; \vert\; d_i &= -\mr{y}_i \text{ for some }
i \in [1, \ldots, n]\}.
\label{eq:chi}
\end{align}
Moreover, as $(d_i, \tilde{r}_i) \to (\mr{y}_i, 1)$, the estimated state $\hat{\xi} \to \xi$ converges to the true state up to the SLAM manifold equivalence.
\end{theorem}

\begin{proof}
The outline of the proof is as follows.
We begin by choosing the parameters of the correction terms to depend on the initial conditions.
Next, it is shown that the observer equations are well-defined for all time.
We introduce storage functions in \eqref{eq:landmark_lyap} and proceed to show that they are non-increasing over time in \eqref{eq:l_dot}.
Then we apply Barbalat's lemma along with persistence of excitation to show that the storage functions converge to zero over time, leading to \eqref{eq:rinv_convergence} and \eqref{eq:rdiff_times_yyV}.
Finally, we show that convergence of the error dynamics is equivalent to convergence of the true and estimated states on the SLAM manifold.

Let $K$ be a compact set in the complement of $\chi$ \eqref{eq:chi}.
Then there exists $\tilde{r}^m > 0$ such that $(d_i, \tilde{r}_i) \in K$ implies that $\tilde{r}_i \geq \tilde{r}^m$.
Choose $k_0 = \frac{1}{2} \tilde{r}^m$ and assume that $(d_i(0), \tilde{r}_i(0)) \in K$.
Then
\begin{align*}
    \tilde{r}_i(0) &\geq k_0, &
    \hat{r}_i(0) &\geq k_0 r_i(0) \geq k_0 \ub{r} > 0,
\end{align*}
for each $i$.
Choose $\ub{c} = \ub{r}$ and $\epsilon = \min(k_0, \frac{1}{2}) \ub{r}$.
Then $\hat{r}_i(0) > \epsilon > 0$ for every $i$.
By continuity of the solutions there exist $T_i > 0$ such that $\hat{r}_i(t) > \epsilon$ and $1 + d_i^\top \mr{y}_i > 0$, and hence $\beta^\epsilon_{\ub{r}}(\hat{r}_i)$ and $\Gamma_i$ are well-defined, for $t \in [0,T_i)$.
(We will show later that $T_i$ can be chosen arbitrarily large, and that $\hat{r}_i(t) > \epsilon$ and $1 + d_i^\top \mr{y}_i > 0$ hold for all time.)

By definition, $r_i = a_i^{-1} \mr{r}_i$, and $\hat{r}_i = \hat{a}_i^{-1} \mr{r}_i$.
Differentiating these with respect to time yields
\begin{align}
\dot{r}_i &= -V_U^\top y_i, \notag \\
&= - d_i^\top \hat{Q}_i V_U, \notag \\
\dot{\hat{r}}_i &= -V_U^\top \hat{y}_i + \hat{r}_i \gamma_i, \notag \\
&= -V_U^\top y_i
+ \alpha_i \beta^\epsilon_{\ub{r}} (\hat{r}_i)
\label{eq:rhat_dot} 
+ \frac{\alpha_i}{\hat{r}_i}\left((1-d_i^\top \mr{y}_i) d_i^\top \hat{Q}_i V_U - {\mr{y}}_i^{\top} (d_i \times \hat{Q}_i V_U)^\times d_i \right),
\end{align}
where $\hat{y}_i = h^i(\hat{\xi})$ is the estimated measurement.
Since $\beta^\epsilon_{\ub{r}}$ \eqref{eq:beta_function} is a barrier function ensuring that $\hat{r}_i(t) > \epsilon$  ($\beta^\epsilon_{\ub{r}} (\hat{r}_i) \to \infty$ as $\hat{r}_i \searrow \epsilon$) and the remaining terms in \eqref{eq:rhat_dot} are bounded, it follows that $\hat{r}_i$ is well defined $\forall t \in [0,T_i)$.

Differentiating the output error $d_i$ yields
\begin{align*}
\dot{d}_i &= \frac{\td}{\td t} \rho(X\hat{X}^{-1}, \mr{y})_i
= \frac{\td}{\td t} \hat{Q}_i Q_i^\top \mr{y}_i, \\
&= \left( \hat{Q}_i \Lambda_Q (U, R_{\hat{P}}^\top(\hat{p}_i - x_{\hat{P}})) - \Gamma_i\hat{Q}_i \right) Q_i^\top \mr{y}_i \\
&\hspace{1cm} - \hat{Q}_i \Lambda_Q (U, R_P^\top(p_i - x_P)) Q_i^\top \mr{y}_i.
\end{align*}
Observe that $R_{\hat{P}}^\top(\hat{p}_i - x_{\hat{P}}) = \hat{r}_i \hat{y}_i$ and $R_P^\top(p_i - x_P) = r_i y_i$.
Using this, the derivative of $d_i$ is simplified to
\begin{align}
    \dot{d}_i &= \Ad_{\hat{Q}_i} \left( \Lambda_Q (U, \hat{r}_i \hat{y}_i) - \Lambda_Q (U, r_i y_i) \right) d_i - \Gamma_i d_i, \notag \\
    &= \Ad_{\hat{Q}_i} \left(\frac{\hat{y}_i \times V_U}{\hat{r}_i} - \frac{y_i \times V_U}{r_i}\right)^\times d_i - \Gamma_i d_i, \notag \\
    &= \left(\frac{{\mr{y}_i}^\times \hat{Q}_i V_U}{\hat{r}_i} - \frac{d_i^\times \hat{Q}_i V_U}{r_i}\right)^\times d_i - \Gamma_i d_i, \label{eq:delta_dot}
\end{align}
where the last step is obtained by using various identities involving the skew symmetric operator.

Recalling \eqref{eq:delta_dot} and using identities involving the skew symmetric operator and projector, the derivative of $1 - {\mr{y}}_i^\top d_i$ is
\begin{align}
    &\ddt \left( 1 - {\mr{y}}_i^\top d_i \right)
    = - {\mr{y}}_i^\top \dot{d}_i, \notag \\
    &= - {\mr{y}}_i^\top \left(\frac{{\mr{y}_i}^\times \hat{Q}_i V_U}{\hat{r}_i} - \frac{d_i^\times \hat{Q}_i V_U}{r_i}\right)^\times d_i + {\mr{y}}_i^\top \Gamma_i d_i, \notag \\
    &= - {\mr{y}}_i^\top \left(\frac{d^\times \hat{Q}_i V_U}{\hat{r}_i} - \frac{d_i^\times \hat{Q}_i V_U}{r_i}\right)^\times d_i
    + \left(\frac{d_i^\top\hat{Q}_i V_U}{\hat{r}_i(1 + d_i^\top \mr{y}_i)} -\frac{k_i}{(1+ d_i^\top \mr{y}_i)^2} \right) {\mr{y}}_i^\top (d_i^\times \mr{y}_i)^\times d_i, \notag \\
    &= (r_i^{-1} - {\hat{r}_i}^{-1}){\mr{y}}_i^\top \left(d^\times \hat{Q}_i V_U \right)^\times d_i
    + \left(\frac{d_i^\top\hat{Q}_i V_U}{\hat{r}_i(1 + d_i^\top \mr{y}_i)} -\frac{k_i}{(1+ d_i^\top \mr{y}_i)^2} \right) {\mr{y}_i}^\top \Pi_{d_i} \mr{y}_i, \notag \\
    &= (r_i^{-1} - {\hat{r}_i}^{-1}){\mr{y}}_i^\top \left(d^\times \hat{Q}_i V_U \right)^\times d_i
    + \left(\frac{d_i^\top\hat{Q}_i V_U}{\hat{r}_i} - \frac{k_i}{1+ d_i^\top \mr{y}_i} \right) (1 - d_i^\top \mr{y}_i). \label{eq:bearing_error_dynamics}
\end{align}
Observe that $k_i \frac{1-d_i^\top \mr{y}_i}{1+d_i^\top \mr{y}_i} \to \infty$ as $1-d_i^\top \mr{y}_i \nearrow 2$.
Since all other terms in \eqref{eq:bearing_error_dynamics} are bounded, it follows that $1-d_i^\top \mr{y}_i < 2 - \nu$ for some small $\nu > 0$, and hence $1+d_i^\top \mr{y}_i > \nu > 0$ and $\Gamma_i$ is well-defined and bounded for all $t \in [0, T_i)$.

We show by contradiction that the domain of definition $[0, T_i)$ can be extended arbitrarily.
Suppose, for some $i$, that $T_i'$ is the largest value such that $\Gamma_i$ and $\beta^\epsilon_{\ub{r}} (\hat{r}_i)$ are well-defined.
Then $\Gamma_i$ and $\beta^\epsilon_{\ub{r}} (\hat{r}_i)$ are both bounded on $[0,T_i')$ by the arguments above, and continuous.
It follows that their limits as $t \to T_i'$ exist and are finite.
But then, by continuity of solutions, $\Gamma_i$ and $\beta^\epsilon_{\ub{r}} (\hat{r}_i)$ can be extended to $[0, T_i' + t_\Delta)$ for some sufficiently small $t_\Delta > 0$.
This contradicts the assumption that $T_i'$ is the maximum value for which $\Gamma_i$ and $\beta^\epsilon_{\ub{r}} (\hat{r}_i)$ are well-defined on $[0,T_i')$, and therefore no such maximum value can exist.
Hence $\Gamma_i$ and $\beta^\epsilon_{\ub{r}} (\hat{r}_i)$ are well-defined on $[0, \infty)$, and so are the observer dynamics.

For each $i$, define the storage function
\begin{align}
l_i(d_i,\tilde{r}_i;r_i) &:= \frac{r_i}{2} \Vert {\mr{y}_i} - d_i \Vert^2  + \frac{r_i^2}{2\alpha_i} (1 - \tilde{r}_i)^2 \notag \\
&=
r_i (1 - {\mr{y}}_i^\top d_i)+ \frac{1}{2\alpha_i} (r_i - \hat{r}_i)^2,
\label{eq:landmark_lyap}
\end{align}
in the variables $(d_i,\tilde{r}_i)$ where the second line follows from $\Vert d_i \Vert = \Vert \mr{y}_i \Vert = 1$ and substituting for $\tilde{r}_i$.
Note that $l_i$ depends on the time varying range $r_i$ as a parameter.
By assumption, $r_i \geq \ub{r} > 0$ and the storage functions $l_i$ are positive definite.

The derivative of $l_i$ may now be computed as follows:
\begin{align}
    \dot{l}_i
    &=
    \dot{r}_i (1 - {\mr{y}}_i^\top d_i)
    + r_i \ddt (1 - {\mr{y}}_i^\top d_i)
    \notag \\ &\phantom{=} \hspace{0.0cm}
    + \frac{1}{\alpha_i} (r_i - \hat{r}_i) (\dot{r}_i - \dot{\hat{r}}_i), \notag \\
    &=
    - (1 - {\mr{y}}_i^\top d_i) d_i^\top \hat{Q}_i V_U
    \notag \\ &\phantom{=} \hspace{0.0cm}
    + r_i (r_i^{-1} - {\hat{r}_i}^{-1}){\mr{y}}_i^\top \left(d^\times \hat{Q}_i V_U \right)^\times d_i
    \notag \\ &\phantom{=} \hspace{0.0cm}
    + r_i \left(\frac{d_i^\top\hat{Q}_i V_U}{\hat{r}_i} - \frac{k_i}{1 + d_i^\top \mr{y}_i} \right) (1 - d_i^\top \mr{y}_i)
    \notag \\ &\phantom{=} \hspace{0.0cm}
    +  (\hat{r}_i - r_i) \beta^\epsilon_{\ub{r}} (\hat{r}_i)
    \notag \\ &\phantom{=} \hspace{0.0cm}
    + (\hat{r}_i - r_i) \hat{r}_i^{-1} (1-d_i^\top \mr{y}_i) d_i^\top \hat{Q}_i V_U
    \notag \\ &\phantom{=} \hspace{0.0cm}
    - (\hat{r}_i - r_i) \hat{r}_i^{-1} {\mr{y}}_i^{\top} (d_i \times \hat{Q}_i V_U)^\times d_i, \notag \\
    &=
    r_i (r_i^{-1} - {\hat{r}_i}^{-1}){\mr{y}}_i^\top \left(d^\times \hat{Q}_i V_U \right)^\times d_i
    \notag \\ &\phantom{=} \hspace{0.0cm}
    - k_i r_i \frac{1 - d_i^\top \mr{y}_i}{1 + d_i^\top \mr{y}_i}
    +  (\hat{r}_i - r_i) \beta^\epsilon_{\ub{r}} (\hat{r}_i)
    \notag \\ &\phantom{=} \hspace{0.0cm}
    - (\hat{r}_i - r_i) \hat{r}_i^{-1} {\mr{y}}_i^{\top} (d_i \times \hat{Q}_i V_U)^\times d_i, \notag \\
    &=
    - k_i r_i \frac{1 - d_i^\top \mr{y}_i}{1 + d_i^\top \mr{y}_i}
    +  (\hat{r}_i - r_i) \beta^\epsilon_{\ub{r}} (\hat{r}_i).
    \label{eq:l_dot}
\end{align}
The second term is negative semi-definite, since it is zero when $\hat{r}_i \geq \ub{r}$ \eqref{eq:beta_function}, and otherwise $\hat{r}_i < \ub{r} \leq r_i$ and $\hat{r}_i - r_i \leq 0$.
It follows that $\dot{l}_i (t) \leq 0$, and $l_i(t) \leq l_i(0)$ for all $t$.

Barbalat's Lemma \cite[Lemma 4.2]{1991_slotine_nonlinear_control} is used to prove $\dot{l}_i \to 0$.
To show that $\dot{l}_i$ is uniformly continuous, it is sufficient to show that $\ddot{l}_i$ is bounded.
Recall that $r_i$ is bounded above and below by assumption, and that $\hat{r}_i(t) > \epsilon$ for all time.
Computing the second derivative of $l_i$, one has
\begin{align*}
    \ddot{l}_i
    &= - k_i \dot{r}_i\frac{1 - {\mr{y}_i}^\top d_i}{1 + {\mr{y}_i}^\top d_i}
    - \frac{2 r_i {\mr{y}_i}^\top \dot{d}_i}{(1 + {\mr{y}_i}^\top d_i)^2}
    +(\hat{r}_i - r_i) \frac{\partial \beta^\epsilon_{\ub{r}}}{\partial \hat{r}_i} \dot{\hat{r}}_i.
\end{align*}
It suffices to show that the component terms are all bounded.
In the first and second terms, $1/(1+{\mr{y}_i}^\top d_i)$ is bounded from the above discussion.
The components $ r_i {\mr{y}_i}^\top \dot{d}_i$  and $\dot{r}_i(1 - {\mr{y}_i}^\top d_i)$ are also bounded due to the boundedness of $\hat{r}_i$, $r_i$, and the assumption that $U$ is bounded.
As for the third term, the component $\dot{\hat{r}}_i$ is bounded since the velocity input $V_U$ is bounded and $\hat{r}_i$ is lower bounded.
This means that both $\beta^\epsilon_{\ub{r}}(\hat{r}_i)$ and its derivative with respect to $\hat{r}_i$ are bounded.
Therefore, $\dot{l}_i$ is uniformly continuous, and by Barbalat's lemma, $\dot{l}_i \to 0$.
This implies that $d_i \to \mr{y}_i$, since both terms that appear in $\dot{l}_i$ \eqref{eq:l_dot} are non-positive .

It remains to show $\tilde{r}_i \to 1$ (or equivalently $\hat{r}_i \to r_i$).
Applying Barbalat's Lemma to $d_i(t)$ and exploiting the same bounding arguments as before, it is straightforward to verify that $\dot{d}_i \to 0$.
It is also easily verified that $\Gamma_i \to 0$ as $d_i \to \mr{y}_i$.

From \eqref{eq:delta_dot}, $\dot{d}_i$ may be written as
\begin{align*}
    \dot{d}_i
    &= \left(\frac{{\mr{y}_i}^\times \hat{Q}_i V_U}{\hat{r}_i} - \frac{d_i^\times \hat{Q}_i V_U}{r_i}\right)^\times d_i - \Gamma_i d_i, \\
    &= \frac{d_i^\times d_i^\times \hat{Q}_i V_U}{r_i} - \frac{d_i^\times {\mr{y}_i}^\times \hat{Q}_i V_U}{\hat{r}_i} - \Gamma_i d_i, \\
    &= (r_i^{-1} - \hat{r}_i^{-1})d_i^\times d_i^\times \hat{Q}_i V_U + \frac{d_i^\times (d_i - \mr{y}_i)^\times \hat{Q}_i V_U}{\hat{r}_i} - \Gamma_i d_i, \\
    &= (r_i^{-1} - \hat{r}_i^{-1}) \hat{Q}_i y_i^\times y_i^\times V_U + \frac{d_i^\times (d_i - \mr{y}_i)^\times \hat{Q}_i V_U}{\hat{r}_i} - \Gamma_i d_i.
\end{align*}
Clearly, $\hat{r}_i^{-1} d_i^\times (d_i - \mr{y}_i)^\times \hat{Q}_i V_U \to 0$ as $d_i \to \mr{y}_i$.
Hence
\begin{align}
    \dot{d}
    &\to (r_i^{-1} - \hat{r}_i^{-1}) \left(\hat{Q}_i {y_i}^\times {y_i}^\times V_U\right),
    \label{eq:rinv_convergence}
\end{align}
as $d_i \to \mr{y}_i$.
Therefore, $(\hat{r}_i - r_i) \frac{1}{\hat{r}_i r_i}\Vert {y_i}^\times {y_i}^\times V_U \Vert \to 0$.
Recall that $r_i \leq \ob{r}$. The estimated range $\hat{r}_i$ is also bounded above by a constant $\ob{\hat{r}}_i$ depending on the initial value of $l_i(0)$.
Since $\vert r_i - \hat{r}_i \vert < (\ob{r} \ob{\hat{r}}_i) \vert r_i^{-1} - \hat{r}_i^{-1} \vert$, it follows from \eqref{eq:rinv_convergence} that
\begin{align} \label{eq:rdiff_times_yyV}
    (r_i - \hat{r}_i) \Vert {y_i}^\times {y_i}^\times V_U \Vert \to 0.
\end{align}
Each $l_i$ must converge to a positive constant $c^0_i \leq l_i(0)$ as $\dot{l}_i \to 0$.
Therefore, \eqref{eq:landmark_lyap} ensures that $(\hat{r}_i - r_i)  \to \pm \sqrt{2 \alpha_i c^0_i}$.
Integrating \eqref{eq:rdiff_times_yyV} over a period of time $T$, and using the fact that $(r_i - \hat{r}_i)$ is converging to a constant, it follows that
\[
    (r_i - \hat{r}_i)\int_{t}^{t+T}\Vert {y_i}^\times {y_i}^\times V_U \Vert d\tau \to 0.
\]
Using the persistence of excitation assumption \eqref{eq:pe_velocity}, it must be that $\hat{r}_i \to r_i$.

Recall the exception set $\chi$ \eqref{eq:chi}.
It is straightforward to verify that this set has measure zero.
Since the initial choice of compact set $K$ in the complement of $\chi$ was arbitrary, then the equilibrium $(\mr{y}_i, 1)$ of $(d_i, \tilde{r}_i)$ is almost semi-globally asymptotically stabilisable by the proposed correction terms (Def.~\ref{def:semiGAS}).


Observe that, at the equilibrium,
\begin{gather*}
    \hat{y}_i = \rho(\hat{X}, \mr{y}_i) = \rho(\hat{X}, d_i) = y_i, \\
    \hat{r}_i = \tilde{r}_i r_i = r_i.
\end{gather*}
Using this, the ego-centric coordinates of the SLAM configuration and their estimates satisfy
\begin{align*}
R_P^\top (p_i - x_P) &= r_i y_i = \hat{r}_i \hat{y}_i = R_{\hat{P}}^\top (\hat{p}_i - x_{\hat{P}}),
\end{align*}
and thus
\[
p_i = (R_{\hat{P} P^{-1}})^\top (\hat{p}_i - x_{\hat{P} P^{-1}}).
\]
Therefore, at the equilibrium point, each $p_i$ is related to each $\hat{p}_i$ by the same rigid body transformation $S = \hat{P} P^{-1}$.
Moreover, it is clear that $P = S^{-1} \hat{P}$.
Hence, the two configurations on $\mr{\calT}_n(3)$ are equivalent on the SLAM manifold, $\xi \simeq \hat{\xi}$.
This completes the proof.
\end{proof}

\subsection{Total Space Representative}\label{sec:Delta}
In Theorem \ref{thm:landmark_observer} the convergence result is independent of the choice of correction term $\Delta$.
This is due to the ego-centric nature of the group action considered and the invariance properties of the SLAM manifold, which cause the inertial frame of a SLAM system to be unobservable.
In essence, choosing $\Delta$ will influence the element $S \in \SE(3)$ that relates the reference and estimated states, but does not influence the convergence of the SLAM error.
Nevertheless, it is clear that as the error converges, it is desirable that the $S(t)$ that relates the reference and established states converges to a constant, essentially capturing the ``inertial map'' property that is desired in visual odometry.
It is a key contribution of this paper to observe that imposing this constraint is a separate requirement from the underlying SLAM solution, that is, we must introduce an additional criterion that captures this property and then use this to design the correction term $\Delta$.

The criterion that we propose to minimize is the weighted mean velocity of the landmark points
\[
\sum_{i=1}^n \kappa_i  \Vert \dot{\hat{p}}_i \Vert^2
\]
For a static environment, the true landmark points are not moving.
For the observer estimate these points may be moving, due to residue velocity associated with the landmark error correction, but also importantly, due to the correction term $\Delta$ that is moving the entire SLAM configuration.
The motion due to $\Delta$ will be strongly correlated, while it is expected that the residue velocity due to the correction terms will be uncorrelated and for large constellations of points average to zero.
Choosing $\Delta$ to minimize this additional criteria can be thought of as minimizing instantaneous map drift.


\begin{proposition} \label{prop:robot_correction}
Let the origin configuration $\mr{\xi} = (\mr{P}, \mr{p}) \in \mr{\calT}_n(3)$, the observer state $\hat{X} = (\hat{A}, (\hat{Q}, \hat{a})_i) \in \VSLAM_n(3)$, and the correction term $\Delta_{\hat{X}} = (\Delta, (\Gamma, \gamma)_i)$ be defined as in the statement of Theorem \ref{thm:landmark_observer}.
Let $\hat{\xi} = (\hat{P}, \hat{p}_i) = \Upsilon(\hat{X}, \mr{\xi}) \in \mr{\calT}_n(3)$ be the estimated state on the total space, and let $\hat{q}_i = R_{\hat{P}}^\top (\hat{p}_i - x_{\hat{P}})$ for each $i$.
Then the solution to
\begin{align} \label{eq:lsq_problem}
\Delta = \argmin_{\Delta \in \se(3)} \left\{ \sum_{i=1}^n \kappa_i \Vert  \dot{\hat{p}}_i \Vert^2 \right\},
\end{align}
where $\kappa_i$ are positive scalars, is given by
\begin{align}
\Delta = \Ad_{\hat{A}}(\Omega_\Delta^\times, V_\Delta),
\label{eq:Delta_correction}
\end{align}
where $\Omega_\Delta$ and $V_\Delta$ are determined by
\begin{align} \label{eq:lsq_solution}
\begin{pmatrix}
\Omega_\Delta \\ V_\Delta
\end{pmatrix}
=
- \left(
\sum_{i=1}^n \kappa_i \begin{pmatrix}
    \hat{q}_i^\times \hat{q}_i^\times & \hat{q}_i^\times \\
    - \hat{q}_i^\times & I_3
\end{pmatrix}
\right)^{-1} 
\left(
\sum_{i=1}^n \kappa_i \begin{pmatrix}
    \hat{q}_i^\times \Ad_{\hat{Q}_i^\top}(\Gamma_i)\hat{q}_i \\
    \gamma_i \hat{q}_i + \Ad_{\hat{Q}_i^\top}(\Gamma_i)\hat{q}_i
\end{pmatrix}
\right),
\end{align}
so long as the inverse in \eqref{eq:lsq_solution} remains well-defined.
\end{proposition}
\begin{proof}
First, observe that
\begin{align*}
\hat{q}_i
&= R_{\hat{P}}^\top (\hat{p}_i - x_{\hat{P}}), \\
&= R_{\hat{P}}^\top (\hat{a}_i^{-1}R_{\hat{P}} \hat{Q}_i^\top R_{\mr{P}}^\top(\mr{p}_i - x_{\mr{P}}) + x_{\hat{P}} - x_{\hat{P}}), \\
&= \hat{a}_i^{-1} \hat{Q}_i^\top R_{\mr{P}}^\top(\mr{p}_i - x_{\mr{P}}).
\end{align*}
Equation \eqref{eq:lsq_problem} presents a weighted least squares problem, and to solve it we analyse the component expressions $\kappa_i\dot{\hat{p}}_i$.
The time derivative of each $\hat{p}_i$ needs to be computed.
Recall that the velocity lift $\Lambda$ is defined precisely so that $\dot{\hat{p}}_i = 0$ when the correction terms are set to zero.
Since differentiation is a linear operation, this means that
\begin{align*}
\dot{\hat{p}}_i
&= \frac{\td}{\td t} \left( R_{\mr{P} \hat{A}} \hat{a}_i^{-1} \hat{Q}_i^\top R_{\mr{P}}^\top(\mr{p}_i - x_{\mr{P}}) + x_{\mr{P} \hat{A}} \right) \\
&= - R_{\mr{P} \hat{A}} \Omega_\Delta^\times \hat{a}_i^{-1} \hat{Q}_i^\top R_{\mr{P}}^\top(\mr{p}_i - x_{\mr{P}}) - R_{\mr{P} \hat{A}} V_\Delta \\
&\hspace{1cm} + \gamma_i R_{\mr{P} \hat{A}} \hat{a}_i^{-1} \hat{Q}_i^\top R_{\mr{P}}^\top(\mr{p}_i - x_{\mr{P}}) \\
&\hspace{1cm} + R_{\mr{P} \hat{A}} \hat{a}_i^{-1} \hat{Q}_i^\top \Gamma_i R_{\mr{P}}^\top(\mr{p}_i - x_{\mr{P}}), \\
&= - R_{\mr{P} \hat{A}} \Omega_\Delta^\times \hat{q}_i - R_{\mr{P} \hat{A}} V_\Delta + \gamma_i R_{\mr{P} \hat{A}} \hat{q}_i \\
&\hspace{1cm} + R_{\mr{P} \hat{A}} \Ad_{\hat{Q}_i^\top}(\Gamma_i) \hat{q}_i, \\
\Vert \dot{\hat{p}}_i \Vert
&= \Vert - \Omega_\Delta^\times \hat{q}_i - V_\Delta + \gamma_i \hat{q}_i + \Ad_{\hat{Q}_i^\top}(\Gamma_i) \hat{q}_i \Vert, \\
&= \left\Vert \begin{pmatrix}
-\hat{q}_i^\times & I_3
\end{pmatrix} \begin{pmatrix}
\Omega_\Delta \\ V_\Delta
\end{pmatrix} - \begin{pmatrix}
\gamma_i \hat{q}_i + \Ad_{\hat{Q}_i^\top}(\Gamma_i) \hat{q}_i
\end{pmatrix} \right\Vert.
\end{align*}
Therefore, by the theory of Weighted Least Squares, \eqref{eq:lsq_solution} is exactly the solution to \eqref{eq:lsq_problem}, as required.
\end{proof}
Proposition \ref{prop:robot_correction} provides a clear way to choose a correction term based on the static landmark assumption, and allows for scalars $\kappa_i$ to be chosen to weight the optimisation.
The computational and memory costs of the correction terms scale linearly with the number of landmarks as opposed to alternative observer designs \cite{2015_Manuel_vslam,2012_Strasdat_CVIU} which scale quadratically.

\section{Simulation Results} \label{sec:simulations}
To verify the landmark observer design in Theorem \ref{thm:landmark_observer} and the robot correction term in Proposition \ref{prop:robot_correction}, we conducted a simulation of a flying vehicle equipped with a monocular camera, observing 5 stationary landmarks as it moves in a circular trajectory with a constant body-fixed velocity $U = (\Omega_U^\times, V_U)$, where $\Omega_U = (0,0,0.5)$rad/s and $V_U = (1.5,0,0)$m/s.
For simplicity, it is assumed that the camera frame coincides with the body-fixed frame of the vehicle.
The initial position of the vehicle was set to $(3,3,5)$m with its rotational axes aligned with the inertial frame $\{0\}$.
The positions of the landmarks were initialised to random positions $(p^1_i, p^2_i, 0)$ on the ground plane with $p^1_i, p^2_i \sim N(0, 5^2)$m.

The origin position $\mr{P}$ of the robot is set to the identity $I_4$, and the origin landmark positions $\mr{p}_i$ are set to $10 y_i(0)$, where $y_i(0)$ are the measured bearings to the true landmark positions at time 0.
That is, the estimated points are initialised with correct bearings and an arbitrary depth of 10 m.
The observer is defined on $\VSLAM_5(3)$ with kinematics given by \eqref{eq:group_observer_state}, landmark correction terms $(\Gamma_i, \gamma_i)$ given by \eqref{eq:landmark_corrections}, where $k_i = 5$ and $\alpha_i = 500$, and robot correction term $\Delta$ given as in Proposition \ref{prop:robot_correction} with $\kappa_i = 1$.
At the end of the simulation, the estimated system state is aligned with the true system state by matching the true and estimated robot poses.
Figure \ref{fig:delta_trajectories} shows the trajectories of estimated landmark positions and robot position over time, as well as the true landmark positions and the true robot trajectory, and figure \ref{fig:delta_lyap} shows the evolution of each of the landmarks' associated storage functions, as defined in \eqref{eq:landmark_lyap}.

This simulation provides a simple demonstration of performance of the proposed observer and illustrates typical trajectories of the landmark estimates during a repeating motion such as the circle.
The estimated landmark positions can be seen to converge to the true landmark positions in a natural manner.
The choice to initialise landmarks as having a bearing matching the initial measurement is a natural one for practical implementation of the algorithm, although Theorem \ref{thm:landmark_observer} provides that almost any initial conditions will converge.
This almost semi-global convergence is a key property of the observer presented here that is not available in many of the state-of-the-art solutions.

\begin{figure}[!htb] \centering
\includegraphics[width=0.9\linewidth]{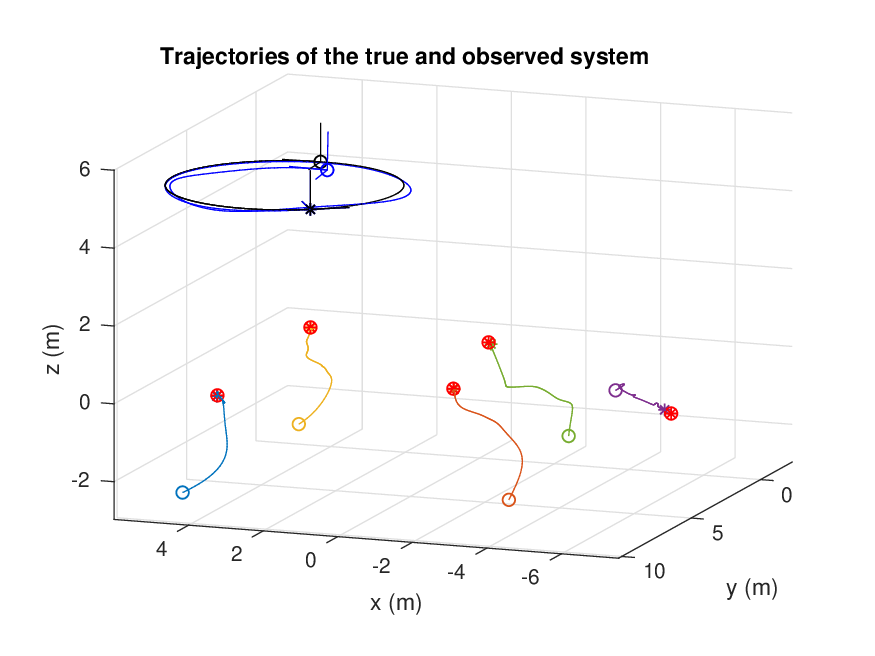}
\caption{The simulated trajectories of observer landmark estimates from initial positions with correct bearings but fixed depth of 10m.
The observer landmark trajectories are shown in a range of colours matching those used in Figure \ref{fig:delta_lyap}, and the true landmark positions are shown in red.
The observer robot trajectory is shown in blue, and the true robot trajectory is shown in black.
The ($\circ$) and ($\star$) markers, respectively, denote the start and end of the trajectories of all the objects shown.
}
\label{fig:delta_trajectories}
\end{figure}

\begin{figure}[!htb] \centering
\includegraphics[width=0.85\linewidth]{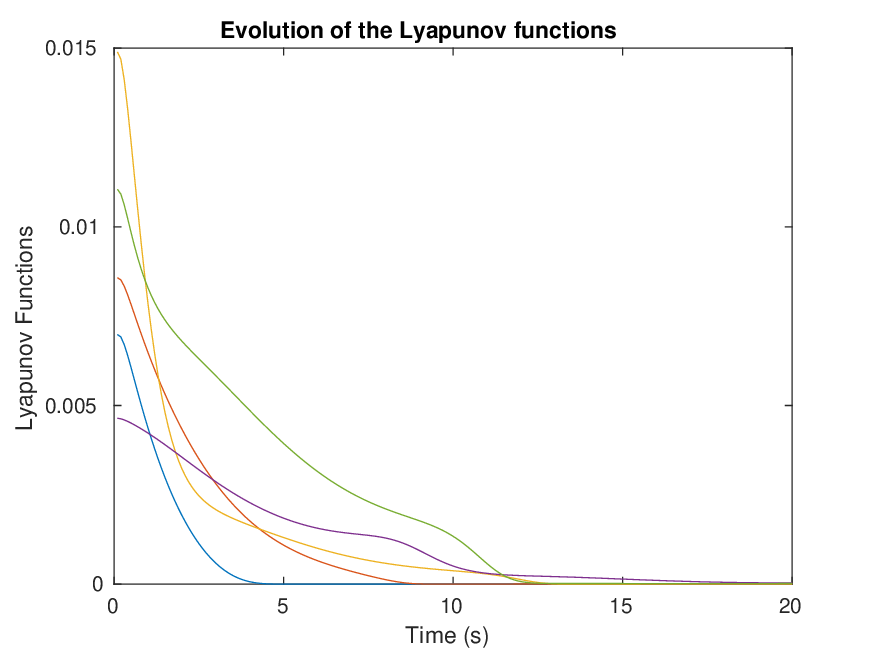}
\caption{The evolution of the storage functions of each of the landmarks in the system shown in Figure \ref{fig:delta_trajectories}.
The colours match the colours of the landmark trajectories in Figure \ref{fig:delta_trajectories}.
The initial convergence of the landmarks is quick as the bearings converge, and then slows as the depths gradually converge.}
\label{fig:delta_lyap}
\end{figure}

\section{Experimental Results} \label{sec:experiments}
To demonstrate the observer described in Theorem \ref{thm:landmark_observer} in a real-world scenario, we gathered video, GPS, and IMU data from a Disco Parrot fixed-wing UAV flying outdoors.
Image features were identified using OpenCV's goodFeaturesToTrack, and subsequently tracked using OpenCV's calcOpticalFlowPyrLK.
These image features were then corrected for camera intrinsics and converted to spherical bearing coordinates before being used as landmark inputs to the observer.
Landmarks are added to the system state after being observed for two frames so that their depths can be initialised from optical flow.
When a landmark is no longer visible, it is removed from the observer state.

Initially, the input velocities $U = (\Omega_U, V_U)$ to the system were estimated by combining the GPS signal (to obtain scale information) with egomotion estimated using the IMU and optical flow from the video stream, as outlined in \cite{2011_schill_egomotion}.
Once sufficiently many landmarks are initialised, the optical flow vectors of each of the landmarks were combined with the existing landmark estimates to compute the input velocity $U = (\Omega_U, V_U)$.
The observer was implemented using Euler integration with gain parameters set to $k_i = 5.0$ and $\alpha_i = 0.5$ for each $i$.
The video recorded had a frame rate of $30$ fps, leading to the Euler integration step being set to $\td t = 0.033$ s.
GPS data was recorded at $25$ Hz in order to compare with the observer's estimated trajectory.
The observer trajectory was aligned to the GPS trajectory using the Umeyama method \cite{1991_umeyama_TPAMI}.
Figure \ref{fig:experiment-aligned-trajectories} shows the aligned trajectories according to the observer and according to the GPS in the $x$ and $y$ directions, where the $z$ direction refers to the plane's altitude.
Figure \ref{fig:experiment-map} shows the final positions of all landmark points in addition to the observer- and GPS-estimated trajectories.
Figure \ref{fig:experiment-frame} shows a frame taken from the video stream used in the experiment, with lines to represent the optical flow tracking overlaid.
A video showcasing the feature tracking system is available online\footnote{\url{https://www.youtube.com/watch?v=QzIxh2eM1_s}}.
The quality of the trajectory and map provided in Figure \ref{fig:experiment-map} show the robustness of the observer to noisy bearing measurements in practice.

\begin{figure}[!htb] \centering
\includegraphics[width=0.85\linewidth]{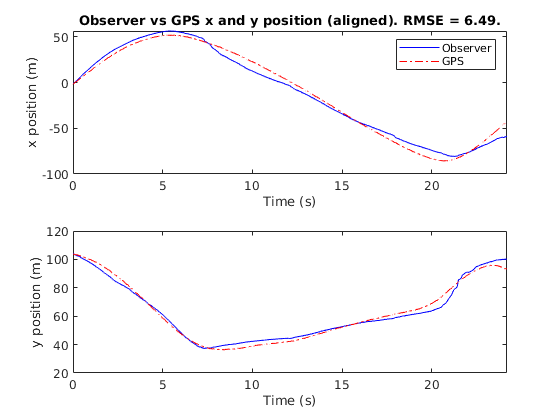}
\caption{The x and y positions of the UAV according to the aligned Observer (blue) and GPS (red).}
\label{fig:experiment-aligned-trajectories}
\end{figure}

\begin{figure}[!htb] \centering
\includegraphics[width=0.8\linewidth]{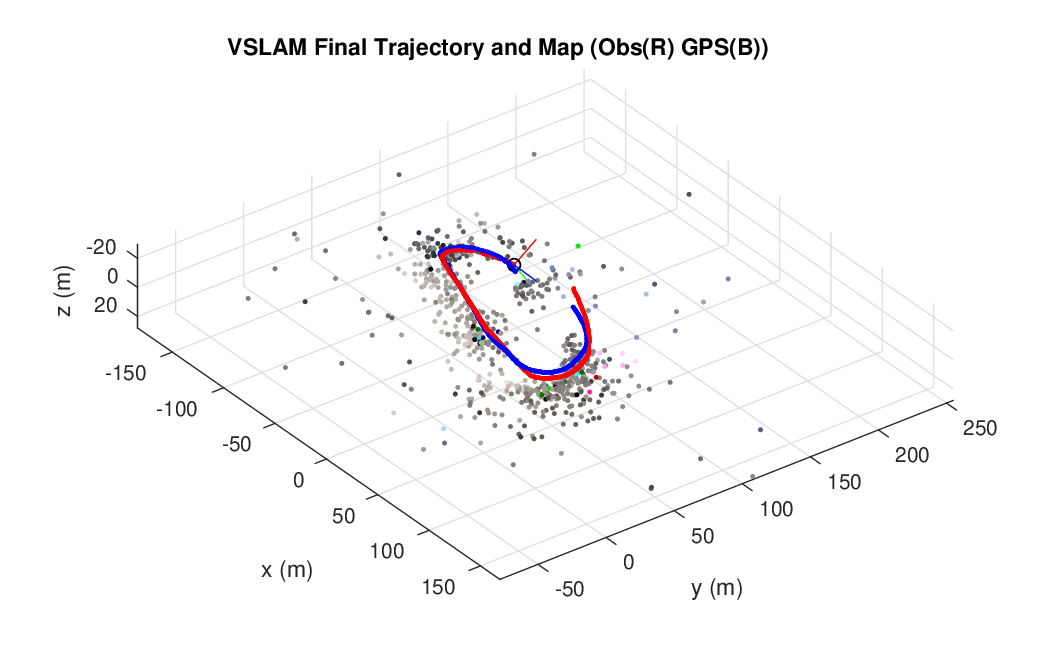}
\caption{The full trajectory of the UAV according to the aligned Observer (blue) and GPS (red), and the final positions of all of the landmarks, coloured with the colour of the pixel where they were first observed.}
\label{fig:experiment-map}
\end{figure}

\begin{figure}[!htb] \centering
    \includegraphics[width=0.8\linewidth]{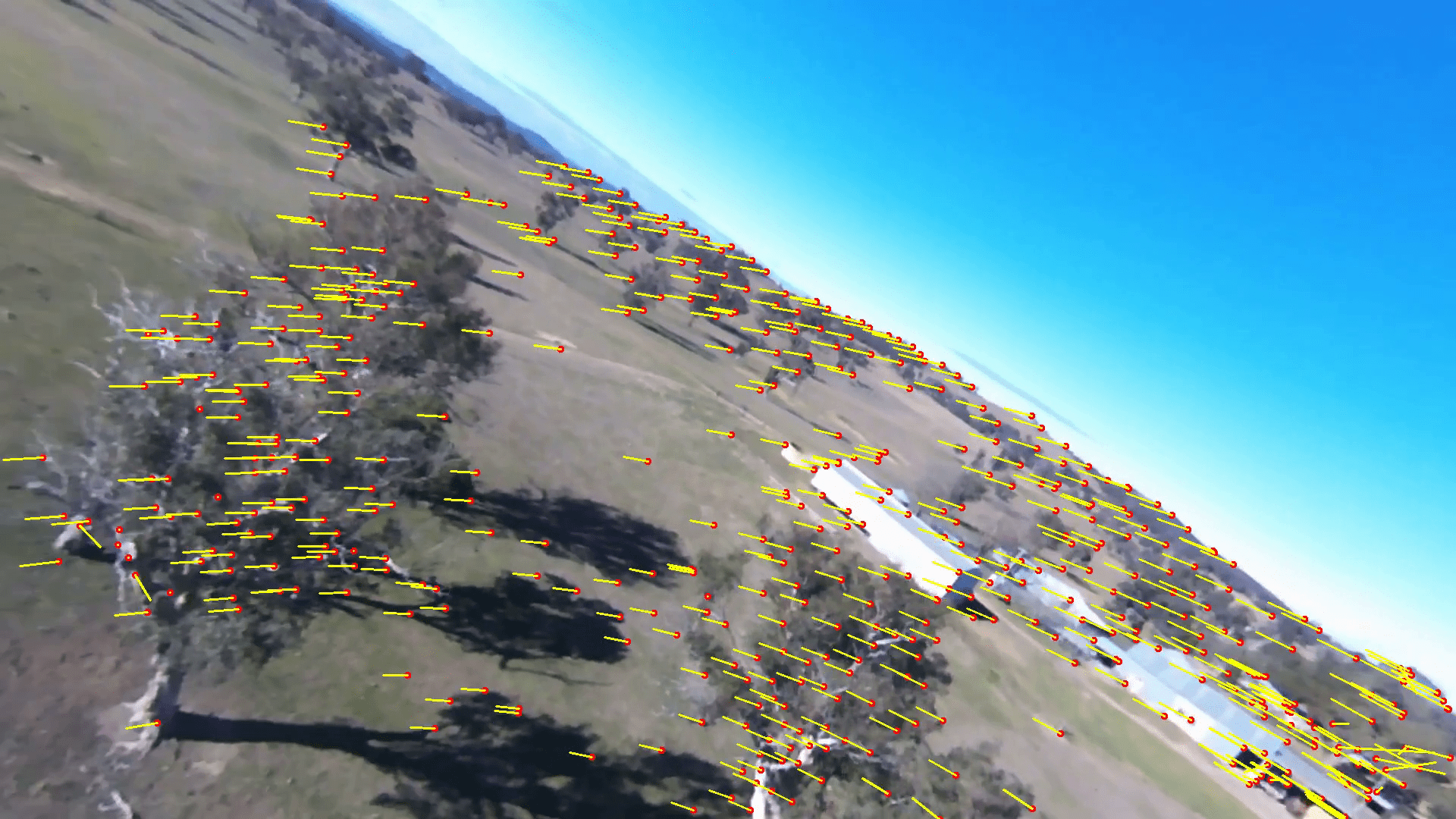}
    \caption{A single frame of the video stream used in the experiment.
    Red circles represent the image features being tracked, and yellow lines represent the vector of motion of the image features between the current frame and the previous frame.
    }
    \label{fig:experiment-frame}
\end{figure}

\section{Conclusion}
This paper presents an observer design posed on a symmetry group for the VSLAM problem.
The SLAM manifold introduced in \cite{2017_Mahony_cdc} and the symmetry group discussed in \cite{2019_vangoor_cdc_vslam} are reintroduced and exploited in the observer design.
The observer is formulated on output errors, and provides a clear way to change the gains for bearing and depth of landmarks separately.
The almost semi-global convergence of the proposed observer improves on the properties of state-of-the-art Extended Kalman Filter systems, which suffer from linearisation errors.
While research into the development of non-linear observers for the SLAM problem is only recent, the observer for VSLAM presented in this paper demonstrates some of the key advantages the approach can offer.

\appendix
\section{Almost Semi-Globally Asymptotically Stabilising Controls and Corrections}

The concept of semi-global asymptotic stabilisability was introduced in \cite{1994_Teel_SCL} to model the dependence of gain on the basin of attraction in feedback stabilisation of a dynamical system.
In the context of an observer analysis, this definition can be transferred to stabilisability of the error dynamics by correction.
However, the classical concept introduced by Teel and Praly does not capture topological constraints associated with stability analysis on manifolds.
For a large class of manifolds, including Lie-groups with $\SO(3)$ as a subgroup, these topological constraints prevent the existence of globally smooth asymptotically stable error dynamics \cite{2000_Bhata_SCL}.
On such spaces, smooth error dynamics will always admit an exception set $\chi$ of unstable or hyperbolic critical points that cannot be part of the basin of attraction of the desired equilibrium.

We consider a system-observer pair, where the observer has the internal model principle, coupled through a correction function $\Delta(\hat{x},y)$ depending on the observer state and system output.
We assume that there is a well defined error function $e : \grpG \times \calM \to \calM$ where $\grpG$ is the observer state space and $\calM$ is the system state space.
The error dynamics evolve on the manifold $\calM$ depending on the system and observer state evolution as well as any exogenous inputs such as velocities.

\begin{definition}\label{def:semiGAS}
An equilibrium $e_\star$ of the error dynamics of a system-observer pair, on a manifold $\calM$, is \emph{almost globally stable} if its basin of attraction is the complement of an exception set $\chi \subset \calM$ of measure zero.

An equilibrium $e_\star$ of the error dynamics of a system-observer pair, on a manifold $\calM$, is said to be \emph{almost semi-globally stabilisable} if, for each compact set $K \subset \calM$ in the complement of an exception set $\chi \subset \calM$ of measure zero, there exists a choice of correction $\Delta(\hat{x},y)$ such that $e_\star$ is an asymptotically stable equilibrium of the error dynamics with basin of attraction containing $K$.
\end{definition}

\bibliographystyle{plainnat}
\bibliography{references}

@incollection{2013_lee_manifolds,
  title={Smooth manifolds},
  author={Lee, John M},
  booktitle={Introduction to Smooth Manifolds},
  year={2013},
  publisher={Springer}
}

@book{1991_slotine_nonlinear_control,
  title={Applied nonlinear control},
  author={Slotine, Jean-Jacques E and Li, Weiping and others},
  volume={1},
  year={1991},
  publisher={Prentice hall Englewood Cliffs, NJ}
}

@inproceedings{2018_brossard_iros,
  title={Unscented Kalman filter on Lie groups for visual inertial odometry},
  author={Brossard, Martin and Bonnabel, Silvere and Barrau, Axel},
  booktitle={2018 IEEE/RSJ International Conference on Intelligent Robots and Systems (IROS)},
  pages={649--655},
  year={2018},
  organization={IEEE}
}

@article{1991_umeyama_TPAMI,
  title={Least-squares estimation of transformation parameters between two point patterns},
  author={Umeyama, S},
  journal={IEEE Transactions on Pattern Analysis and Machine Intelligence},
  volume={13},
  number={4},
  pages={376--380},
  year={1991},
  publisher={IEEE}
}

@incollection{2011_schill_egomotion,
  title={Estimating ego-motion in panoramic image sequences with inertial measurements},
  author={Schill, Felix and Mahony, Robert and Corke, Peter},
  booktitle={Robotics Research},
  pages={87--101},
  year={2011},
  publisher={Springer}
}

@inproceedings{2019_vangoor_cdc_vslam,
  title={A Geometric Observer Design for Visual Localisation and Mapping},
  author={van Goor, Pieter and Mahony, Robert and Hamel, Tarek and Trumpf, Jochen},
  booktitle={2019 IEEE 58th Conference on Decision and Control (CDC)},
  pages={2543--2549},
  year={2019},
  organization={IEEE}
}

@INPROCEEDINGS{2017_bjorne_fusion, 
author={E. {Bjorne} and T. A. {Johansen} and E. F. {Brekke}}, 
booktitle={2017 20th International Conference on Information Fusion (Fusion)}, 
title={Redesign and analysis of globally asymptotically stable bearing only {SLAM}}, 
year={2017}, 
volume={}, 
number={}, 
pages={1-8}, 
ISSN={}, 
month={July},}

@ARTICLE{2018_Hamel_TAC, 
author={T. {Hamel} and C. {Samson}}, 
journal={IEEE Transactions on Automatic Control}, 
title={Riccati Observers for the Nonstationary PnP Problem}, 
year={2018}, 
volume={63}, 
number={3}, 
pages={726-741}, 
doi={10.1109/TAC.2017.2726179}, 
ISSN={0018-9286}, 
month={March},}

@article{2018_Lourenco_RAS,
title = "A globally exponentially stable filter for bearing-only simultaneous localization and mapping with monocular vision",
journal = "Robotics and Autonomous Systems",
volume = "100",
pages = "61 - 77",
year = "2018",
issn = "0921-8890",
doi = "https://doi.org/10.1016/j.robot.2017.11.001",
url = "http://www.sciencedirect.com/science/article/pii/S0921889017300234",
author = "Pedro Lourenço and Pedro Batista and Paulo Oliveira and Carlos Silvestre",
keywords = "Simultaneous localization and mapping, 3-D mapping, Sensor fusion, Monocular vision, Global exponential stability"
}

@article{2015_Manuel_vslam,
  title={Visual simultaneous localization and mapping: a survey},
  author={Fuentes-Pacheco, Jorge and Ruiz-Ascencio, Jos{\'e} and Rend{\'o}n-Mancha, Juan Manuel},
  journal={Artificial Intelligence Review},
  volume={43},
  number={1},
  pages={55--81},
  year={2015},
  publisher={Springer}
}

@article{2018_forbes_TAC,
  title={Gradient-based observer for simultaneous localization and mapping},
  author={Zlotnik, David Evan and Forbes, James Richard},
  journal={IEEE Transactions on Automatic Control},
  volume={63},
  number={12},
  pages={4338--4344},
  year={2018},
  publisher={IEEE}
}

@ARTICLE{2008_Mahony_tac, 
author={R. Mahony and T. Hamel and J. Pflimlin}, 
journal={IEEE Transactions on Automatic Control}, 
title={Nonlinear Complementary Filters on the Special Orthogonal Group}, 
year={2008}, 
volume={53}, 
number={5}, 
pages={1203-1218}, 
keywords={attitude control;Lyapunov methods;nonlinear filters;observers;SO(3) groups;nonlinear complementary filters;special orthogonal group;attitude estimates;inertial measurement units;SO(3);Lyapunov analysis;deterministic observer kinematics;Passive filters;Costs;Measurement units;Noise level;Time varying systems;Additive noise;Filtering;Kinematics;Position measurement;Angular velocity;Attitude estimates;complementary filter;nonlinear observer;special orthogonal group}, 
doi={10.1109/TAC.2008.923738}, 
ISSN={0018-9286}, 
month={June},}

@InProceedings{2009_Baldwin_icra,
  Title                    = {A nonlinear observer for 6 {DOF} pose estimation from inertial and bearing measurements},
  Author                   = {G. Baldwin and R. Mahony and J. Trumpf},
  Booktitle                = {Proceedings of the IEEE International Conference on Robotics and Automation (ICRA)},
  Year                     = {2009},
  Pages                    = {2237--2242}
}

@Article{2008_Bonnabel_TAC,
  Title                    = {Symmetry-Preserving Observers},
  Author                   = {S. Bonnabel and P. Martin and P. Rouchon},
  Journal                  = {IEEE Transactions on Automatic Control},
  Year                     = {2008},
  Number                   = {11},
  Pages                    = {2514--2526},
  Volume                   = {53}
}

@InCollection{2017_LebHamMahSam,
  Title                    = {Observers for position and velocity bias estimation from single or multiple direction outputs},
  Author                   = {F. Le Bras and T. Hamel and R. Mahony and C. Samson},
  Booktitle                = {Sensing and Control for Autonomous Vehicles},
  Publisher                = {Lecture Notes in Control and Information Sciences 474, Springer},
  Year                     = {2017},
  Chapter                  = {1},
  Editor                   = {T.I. Fossen and K.Y. Pettersen and H. Nijmeijer}
}

@article{2016_Cadena_TRO,
  title={Past, present, and future of simultaneous localization and mapping: Toward the robust-perception age},
  author={Cadena, Cesar and Carlone, Luca and Carrillo, Henry and Latif, Yasir and Scaramuzza, Davide and Neira, Jos{\'e} and Reid, Ian and Leonard, John J},
  journal={IEEE Transactions on robotics},
  volume={32},
  number={6},
  pages={1309--1332},
  year={2016},
  publisher={IEEE}
}

@InProceedings{RM_2011_Hua_cdc,
  Title                    = {Observer design on the Special Euclidean group {SE(3)}},
  Author                   = {Minh-Duc Hua and Mohammad Zamani and Jochen Trumpf and Robert Mahony and Tarek Hamel},
  Booktitle                = {Proceedings of the IEEE Conference on Decision and Control and European Control Conference},
  Year                     = {2011},

  Address                  = {Orlando, FL, USA},
  Month                    = {December},

  Outcome_of               = {DP0987411},
  Owner                    = {u4033888},
  Starred_in               = {DP160100783},
  Timestamp                = {2010.01.25}
}

@Article{2016_LouGueBatOliSil,
  Title                    = {Simultaneous localization and mapping for aerial vehicles: a 3-D sensor-based GAS filter},
  Author                   = {Pedro Louren{\c{c}}o and Bruno Guerreiro and Pedro Batista and Paulo Oliveira and Carlos Silvestre},
  Journal                  = {Autonomous Robots},
  Year                     = {2016},
  Number                   = {5},
  Pages                    = {881--902},
  Volume                   = {40},

  Publisher                = {Springer US}
}

@InProceedings{RM_2013_Mahony_nolcos,
  Title                    = {Observers for Kinematic Systems with Symmetry},
  Author                   = {Robert Mahony and Jochen Trumpf and Tarek Hamel},
  Booktitle                = {Proceedings of 9th IFAC Symposium on Nonlinear Control Systems (NOLCOS)},
  Year                     = {2013},
  Note                     = {Plenary paper.},
  Pages                    = {17 pages},

  Outcome_of               = {DP120100316},
  Owner                    = {u4033888},
  Starred_in               = {DP160100783},
  Timestamp                = {2013.06.05}
}

@Article{2010_Vasconcelos.SCL,
  Title                    = {A nonlinear position and attitude observer on {SE(3)} using landmark measurements},
  Author                   = {J.F. Vasconcelos and R. Cunha and C. Silvestre and P. Oliveira},
  Journal                  = {Systems \& Control Letters},
  Year                     = {2010},
  Number                   = {3-4},
  Pages                    = {155--166},
  Volume                   = {59},

  Doi                      = {DOI: 10.1016/j.sysconle.2009.11.008},
  ISSN                     = {0167-6911},
  Keywords                 = {Nonlinear observer, Lyapunov stability, Bias filtering, Exponential convergence, Linear time-varying systems}
}

@article{2016_Barrau_arxive,
  author  = {Axel Barrau and Silvere Bonnabel},
  title   = {An {EKF}-{SLAM} algorithm with consistency properties},
  year    = {2016},
  url     = {https://arxiv.org/abs/1510.06263v3},
  note    = {arXiv:1510.06263},
  journal = {arXiv:1510.06263},
}

@Article{2012_Strasdat_CVIU,
  author  = {H. Strasdat and J.M.M. Montiel and A.J. Davison.},
  title   = {Visual {{SLAM}}: Why filter?},
  journal = {Computer Vision and Image Understanding (CVIU)},
  year    = {2012},
  volume  = {30},
  number  = {2},
  pages   = {65--77},
}

@InProceedings{2017_Mahony_cdc,
  author    = {Robert Mahony and Tarek Hamel},
  title     = {A Geometric Nonlinear Observer for Simultaneous Localisation and Mapping},
  booktitle = {Conference on Decision and Control},
  year      = {2017},
  month     = {December},
  pages     = {6 pages},
  address   = {Melbourne},
}

@InProceedings{2018_Delmerico_icra,
  author    = {J. Delmerico and D. Scaramuzza},
  title     = {A Benchmark Comparison of Monocular Visual-Inertial Odometry Algorithms for Flying Robots},
  booktitle = {IEEE International Conference on Robotics and Automation (ICRA)},
  year      = {2018},
}

@Article{1994_Teel_SCL,
  title={Global stabilizability and observability imply semi-global stabilizability by output feedback},
  author={Teel, Andrew and Praly, Laurent},
  journal={Systems \& Control Letters},
  volume={22},
  number={5},
  pages={313--325},
  year={1994},
  publisher={Elsevier}
}

@Article{2000_Bhata_SCL,
  author    = {S.P. Bhata and D.S. Bernstein},
  title     = {Atopological obstruction to continuous global stabilization of rotational motion and the unwinding phenomenon},
  journal   = {Systems \& Control Letters},
  year      = {2000},
  volume    = {39},
  pages     = {63--70},
  file      = {:F\:\\Reference_Archive\\2000_Bhat_SCL.pdf:PDF},
  owner     = {u4033888},
  timestamp = {2009.06.10},
}

\end{document}